\definecolor{lightgreen}{rgb}{.9,1,.9}
\newcolumntype{L}[1]{>{\raggedright\arraybackslash}p{#1}}
\newcolumntype{C}[1]{>{\centering\arraybackslash}p{#1}}
\newcolumntype{R}[1]{>{\raggedleft\arraybackslash}p{#1}}
\theoremstyle{plain} 
\newtheorem{proposition}{Proposition}
\newtheorem{definition}{Definition}
\newtheorem{theorem}{Theorem}
\newtheorem{lemma}{Lemma}
\newtheorem{assumption}{Assumption}
\def\defn{\,\coloneqq\,}
\def\Fix{{\mathrm{Fix}}}
\def\Zer{{\mathrm{Zer}}}
\def\prox{{\mathrm{prox}}}
\def\min{\mathop{\mathrm{min}}}
\def\C{\mathbb{C}}
\def\R{\mathbb{R}}
\def\E{\mathbb{E}}
\def\zerobm{{\bm{0}}}
\def\ebm{{\bm{e}}}
\def\gbm{{\bm{g}}}
\def\xbm{{\bm{x}}}
\def\zbm{{\bm{z}}}
\def\ybm{{\bm{y}}}
\def\vbm{{\bm{v}}}
\def\zbm{{\bm{z}}}
\def\wbm{{\bm{w}}}
\def\hhat{{\widehat{h}}}
\def\phat{{\widehat{p}}}
\def\phihat{{\widehat{\phi}}}
\def\Dhat{{\widehat{D}}}
\def\Ghat{{\widehat{G}}}
\def\Abm{{\bm{A}}}
\def\Dbm{{\bm{D}}}
\def\Pbm{{\bm{P}}}
\def\Fbm{{\bm{F}}}
\def\Ibm{{\bm{I}}}
\def\thetabm{{\bm{\theta }}}
\def\Abm{{\bm{A}}}
\def\Dbm{{\bm{D}}}
\def\Pbm{{\bm{P}}}
\def\Sbm{{\bm{S}}}
\def\Fbm{{\bm{F}}}
\def\Ibm{{\bm{I}}}
\def\Dsf{{\mathsf{D}}}
\def\Isf{{\mathsf{I}}}
\def\Rsf{{\mathsf{R}}}
\def\Tsf{{\mathsf{T}}}
\def\Tsf{{\mathsf{T}}}
\def\Dsf{{\mathsf{D}}}
\def\Isf{{\mathsf{I}}}
\def\xbmast{{\bm{x}^\ast}}
\def\xbmbar{{\overline{\bm{x}}}}
\def\xbmhat{{\widehat{\bm{x}}}}
\def\argmin{\mathop{\mathrm{arg\,min}}} 
\begin{document}

\title{Deep Model-Based Architectures for Inverse Problems under Mismatched Priors}

\author{
Shirin~Shoushtari$^{1}$,~Jiaming~Liu$^{1}$, \\
Yuyang~Hu$^{1}$,~and~Ulugbek~S.~Kamilov$^{1, 2,}$\thanks{This paper is based upon work supported by the NSF CAREER award under grant CCF-2043134.}\\
\emph{\footnotesize $^{1}$Department of Electrical \& Systems Engineering,~Washington University in St.~Louis, MO 63130, USA}\\
\emph{\footnotesize $^{2}$Department of Computer Science \& Engineering,~Washington University in St.~Louis, MO 63130, USA}
}

\date{}

\maketitle

\begin{abstract}
\noindent
There is a growing interest in deep model-based architectures (DMBAs) for solving imaging inverse problems by combining physical measurement models and learned image priors specified using convolutional neural nets (CNNs). For example, well-known frameworks for systematically designing DMBAs include plug-and-play priors (PnP), deep unfolding (DU), and deep equilibrium models (DEQ). While the empirical performance and theoretical properties of DMBAs have been widely investigated, the existing work in the area has primarily focused on their performance when the desired image prior is known exactly. This work addresses the gap in the prior work by providing new theoretical and numerical insights into DMBAs under mismatched CNN priors. Mismatched priors arise naturally when there is a distribution shift between training and testing data, for example, due to test images being from a different distribution than images used for training the CNN prior. They also arise when the CNN prior used for inference is an approximation of some desired statistical estimator (MAP or MMSE). Our theoretical analysis provides explicit error bounds on the solution due to the mismatched CNN priors under a set of clearly specified assumptions. Our numerical results compare the empirical performance of DMBAs under realistic distribution shifts and approximate statistical estimators. 
\end{abstract}


\section{Introduction}

One of the most widely-studied problems in computational imaging is the recovery of an unknown image from a set of noisy measurements. The recovery problem is often formulated as an \emph{inverse problem} and solved by integrating the measurement model characterizing the response of the imaging instrument with a regularizer imposing prior knowledge on the unknown image. Some well-known image priors include nonnegativity, transform-domain sparsity, and self-similarity~\cite{Rudin.etal1992, Figueiredo.Nowak2001, Elad.Aharon2006, Danielyan2013}.

\emph{Deep Learning (DL)} has emerged in the past decade as a powerful data-driven paradigm for solving inverse problems and has improved the state-of-the-art in a number of imaging applications (see reviews~\cite{McCann.etal2017, Lucas.etal2018}). A traditional DL strategy for solving inverse problems is based on training a \emph{convolutional neural network (CNN)} to perform a regularized inversion of the forward model, thus leading to a mapping from the measurements to the unknown image. For example, U-Net~\cite{Ronneberger.etal2015} and DnCNN~\cite{Zhang.etal2017} are two prototypical architectures used for designing traditional DL methods for solving imaging inverse problems. 

There is a growing interest in \emph{deep model-based architectures (DMBAs)} for inverse problems that integrate physical measurement models and CNN image priors (see reviews~\cite{Ongie.etal2020, Ahmad.etal2020, Monga.etal2021, Kamilov.etal2022}). Well-known DMBAs that explicitly account for the measurement models include \emph{plug-and-play priors (PnP)}, \emph{deep unfolding (DU)}, \emph{compressive sensing using generative models (CSGM)}, and \emph{deep equilibrium architectures (DEQ)}~\cite{Venkatakrishnan.etal2013, Sreehari.etal2016, Gregor.LeCun2010, Bora.etal2017, Gilton.etal2021}. DMBAs are systematically obtained from model-based iterative algorithms by parametrizing the regularization step as a CNN and training it to adapt to the empirical distribution of desired images. An important conceptual point about typical DMBAs is that they do \emph{not} solve an optimization problem. That is, even when the original model-based algorithm solves an optimization problem, once the regularizer is replaced with a CNN, then there is no longer any corresponding function to minimize. Remarkably, the heuristic of using CNNs not associated with any explicit regularization function exhibited great empirical success and spurred much theoretical and algorithmic work~\cite{Ongie.etal2020, Ahmad.etal2020, Monga.etal2021, Kamilov.etal2022}. 

Despite the rich literature on DMBAs, the existing work in the area has primarily focused on settings where the desired image prior is known \emph{exactly}. While this assumption has led to many useful algorithms and insights, it fails to capture the range of situations arising in imaging inverse problems. Specifically, the knowledge of the image prior is only approximate if there is a distribution shift between training and testing data, for example, due to testing images being from a different distribution than images used for training the CNN prior. Alternatively, the CNN prior used for inference within DMBA might be an approximation of some desired true statistical estimator, such as \emph{maximum a posteriori probability (MAP)} estimator or \emph{minimum mean squared error (MMSE)} estimator. In both of these settings, it would be valuable to gain insights on how the discrepancies in CNN priors influence the discrepancies in estimated images.

In this work, we address this gap by providing a set of new theoretical and numerical results into DMBAs under mismatched CNN priors. We focus on the architecture derived from the \emph{steepest descent} variant of \emph{regularization by denoising (SD-RED)}~\cite{Romano.etal2017} by considering two types of CNN priors: (a) image denoisers trained to remove \emph{additive white Gaussian noise (AWGN)}; (b) \emph{artifact removal (AR)} operators learned end-to-end using DEQ. Our theoretical analysis provides explicit error bounds on the solution obtained by SD-RED due to the mismatched CNN priors under a set of explicitly specified assumptions. Our numerical results illustrate the practical influence of mismatched CNN priors on image recovery from subsampled Fourier measurements, which is a well-known problem in accelerated \emph{magnetic resonance imaging (MRI)}~\cite{Lustig.etal2008}. Specifically, we provide numerical results on two related but distinct scenarios where: (i) CNN priors are trained on data mismatched to the testing data and (ii) CNN priors are trained to approximate an explicit image regularizer.


\section{Background}

\noindent
\textbf{Inverse problems.} We consider the problem of recovering an unknown image $\xbmast \in \R^n$ from its  measurements $\ybm \in \R^m$. The problem is traditionally formulated as an inverse problem where the solution is computed by solving an optimization problem 
\begin{equation}
\label{Eq:OptimizationForInverseProblem}
\xbmhat = \argmin_{\xbm \in \R^n} f(\xbm) \quad\text{with}\quad f(\xbm) = g(\xbm) + h(\xbm),
\end{equation}
where $g$ is the \emph{data-fidelity term} enforcing consistency of the solution with $\ybm$ and $h$ is the \emph{regularizer} enforcing prior knowledge on $\xbm$. The formulation in eq.~\eqref{Eq:OptimizationForInverseProblem} corresponds to the MAP estimator when
\begin{equation}
\label{Eq:MAP}
g(\xbm) = - \log (p_{\ybm|\xbm}(\xbm)) \quad \text{and} \quad h(\xbm) = - \log(p_{\xbm}(\xbm))
\end{equation}
where $p_{\ybm|\xbm}$ is the likelihood relating $\xbm$ to measurements $\ybm$ and $p_{\xbm}$ is the prior distribution. For example, given measurements of the form $\ybm = \Abm\xbm+\ebm$, where $\Abm$ is the \emph{measurement operator} (also known as the \emph{forward operator}) characterizing the response of the imaging instrument and $\ebm$ is AWGN, the data-fidelity term reduces to the quadratic function $g(\xbm) = \frac{1}{2}\|\ybm-\Abm\xbm\|_2^2$. On the other hand, a widely-used sparsity promoting regularizer in imaging inverse problems is \emph{total variation (TV)} $h(\xbm) = \tau \|\Dbm\xbm\|_1$, where $\Dbm$ is the image gradient and $\tau > 0 $ controls the strength of regularization~\cite{Rudin.etal1992, Bioucas-Dias.Figueiredo2007, Beck.Teboulle2009}. 

\noindent
\textbf{Model-based optimization.} Proximal algorithms are often used for solving problems of form~\eqref{Eq:OptimizationForInverseProblem} when $h$ is nonsmooth (see the review~\cite{Parikh.Boyd2014}). Two widely used families of proximal algorithms for imaging inverse problems are the \emph{proximal gradient method (PGM)}~\cite{Figueiredo.Nowak2003 , Daubechies.etal2004, Bect.etal2004, Beck.Teboulle2009} and the \emph{alternating direction method of multipliers (ADMM)}~\cite{Eckstein.Bertsekas1992, Afonso.etal2010, Ng.etal2010, Boyd.etal2011}. Both PGM and ADMM avoid differentiating $h$ by using the \emph{proximal operator}, which can be defined as
\begin{equation}
\label{eq:proximaloperator}
\prox_{\sigma^2 h} (\zbm) := \argmin_{\xbm \in \R^n} \bigg\{ \frac{1}{2} \|\xbm -\zbm\|^2_2 + \sigma^2 h(\xbm) \bigg\}, \quad \sigma > 0,
\end{equation}
for any proper, closed, and convex function $h$~\cite{Parikh.Boyd2014}. Comparing eq.~\eqref{eq:proximaloperator} and eq.~\eqref{Eq:OptimizationForInverseProblem}, we see that the proximal operator can be interpreted as a MAP estimator for the AWGN denoising problem
\begin{equation}
\label{Eq:MAPDenoise}
\zbm = \xbm_0 + \wbm \quad\text{where}\quad \xbm_0 \sim p_{\xbm_0}\,, \quad \wbm \sim \mathcal{N}(0, \sigma^2 \Ibm)\,,
\end{equation}
by setting $h(\xbm) = -\log(p_{\xbm_0}(\xbm))$. It is worth noting that another less known but equally valid statistical interpretation of the proximal operator is as a MMSE estimator~\cite{Gribonval.etal2012, Gribonval.Machart2013}.

\medskip\noindent
\textbf{PnP and RED.} PnP~\cite{Venkatakrishnan.etal2013, Sreehari.etal2016} and RED~\cite{Romano.etal2017} are two related families of iterative algorithms that enable integrating measurement operators and CNN priors for solving imaging inverse problems (see the recent review of PnP in~\cite{Kamilov.etal2022}). Since for general denoisers PnP/RED do not solve an optimization problem~\cite{Reehorst.Schniter2019}, it is common to interpret PnP/RED as fixed-point iterations of some high-dimensional operators. For example, given a denoiser $\Dsf_\thetabm: \R^n \rightarrow \R^n$ parameterized by a CNN with weights $\thetabm$, the iterations of SD-RED~\cite{Romano.etal2017} can be written as
\begin{equation}
\label{Eq:REDItetation}
\xbm^k = T_\thetabm(\xbm^{k-1}) = \xbm^{k-1} - \gamma G_\thetabm(\xbm^{k-1}) \quad\text{with}\quad G_\thetabm(\xbm) \defn \nabla g(\xbm) + \tau(\xbm - D_\thetabm(\xbm))\;,
\end{equation}
where $g$ is the data-fidelity term, and  $\gamma, \tau > 0$ are the step size and the regularization parameters, respectively. SD-RED seeks to compute a fixed-point $\xbmbar \in \R^n$ of the operator $T_\thetabm$, which is equivalent to finding a zero of the operator $G_\thetabm$
\begin{equation}
\label{Eq:ZeroRED}
\xbmbar\in\Fix(T_\thetabm)\defn\{\xbm\in\R^n:T_\thetabm(\xbm)=\xbm\}\quad\Leftrightarrow\quad G_\thetabm(\xbmbar) = \nabla g(\xbmbar) + \tau(\xbmbar-D_\thetabm(\xbmbar))=\zerobm\;,
\end{equation}
The solutions of~\eqref{Eq:ZeroRED} balance the requirements to be both data-consistent (via $\nabla g$) and noise-free (via $(I-D_\thetabm)$), which can be intuitively interpreted as finding an equilibrium between the physical measurement model and CNN prior. Remarkably, this heuristic of using denoisers not necessarily associated with any $h$ within an iterative algorithm exhibited great empirical success~\cite{Zhang.etal2017a, Metzler.etal2018, Dong.etal2019, Zhang.etal2019, Mataev.etal2019, Sun.etal2019b, Liu.etal2020, Ahmad.etal2020, Wei.etal2020, Xie.etal2021} and spurred a great deal of theoretical work on PnP/RED~\cite{Chan.etal2016, Meinhardt.etal2017, Buzzard.etal2017, Reehorst.Schniter2019, Ryu.etal2019, Sun.etal2018a, Tirer.Giryes2019, Teodoro.etal2019, Xu.etal2020, Tang.Davies2020, Sun.etal2021, Cohen.etal2020}. Recent line of PnP work has also explored the parameterization of the regularization functions as CNNs, thus leading to explicit loss functions~\cite{Cohen.etal2021, Hurault.etal2022}.

\medskip\noindent
\textbf{DU and DEQ.} DU (also known as \emph{deep unrolling} or \emph{algorithm unrolling}) is a DMBA paradigm highly related to PnP/RED with roots in sparse coding \cite{Gregor.LeCun2010}. DU has gained popularity in inverse problems due to its ability to provide a systematic connection between iterative algorithms and deep neural network architectures~\cite{Ongie.etal2020, Monga.etal2021}. The SD-RED algortihm~\eqref{Eq:REDItetation} can be turned into a DU architectures by truncating it to a fixed number of iterations $t \geq 1$, and training the corresponding architecture end-to-end in a supervised fashion by comparing the predicted image $\xbm^t(\thetabm)$ to the ground-truth $\xbmast$. 
DEQ is a recent extension of DU that can accommodate an arbitrary number of iterations~\cite{bai.etal2019, Gilton.etal2021}. DEQ can be implemented for SD-RED by replacing $\xbm^t(\thetabm)$ in the DU loss by a fixed-point $\overline{\xbm}(\thetabm)$ in eq.~\eqref{Eq:ZeroRED} and using implicit differentiation to update the weights $\thetabm$. The benefit of DEQ over DU is that it doesn't require the storage of the intermediate variables in training, thus reducing the memory complexity. However, DEQ requires the computation of the fixed-point $\overline{\xbm}(\thetabm)$, which can increase the computational complexity.

\medskip\noindent
\textbf{Other related work.} There were a number of recent publications exploring the topic of distribution shifts in inverse problems. The use of DMBAs for adapting pre-trained CNNs to shifts in the measurement model has been discussed in several publications~\cite{Liu.etal2020, Xie.etal2021, Gilton.etal2021a}. The performance gap due to distribution shifts on several well-known DL architectures has been empirically quantified for accelerated MRI in~\cite{Darestani.etal2021}. \emph{Test-time training} was proposed as a strategy to decrease the performance gap for certain distribution shifts in~\cite{Darestani.etal2022}. The robustness of \emph{compressive sensing (CS)} recovery using mismatched distributions with bounded Wasserstein distances was analyzed in~\cite{Jalal.etal2021}. The robustness of CSGM to changes in the ground-truth distribution and measurement operator in CS-MRI was investigated in~\cite{Jalal.etal2021a}. Finally, it is worth to briefly note a distinct line of work in optimization exploring the impact of \emph{inexact} proximal operators on the convergence of the traditional proximal algorithms~\cite{Bertsekas2011, Schmidt.etal2011, Devolder.etal2013}.

\medskip\noindent
\textbf{Our contributions.} Despite their conceptual differences in training, PnP, DU, and DEQ can be implemented using the same architecture during inference. For example, the SD-RED iteration in eq.~\eqref{Eq:REDItetation} can be interpreted as a PnP method when the CNN prior $D_\thetabm$ is an AWGN denoiser, as a DU architecture when $D_\thetabm$ was trained using a fixed number of unfolded iterations, and as a DEQ architecture when $D_\thetabm$ was trained at a fixed point. Note that the image prior in DU and DEQ is not an AWGN denoiser, but rather an artifact removal (AR) operator $D_\thetabm$ trained by taking into account the distribution of artifacts within the iterations of SD-RED. The existing work on PnP, DU, and DEQ has primarily focused on the settings where the inference is performed assuming that $D_\thetabm$ exactly corresponds to the desired AWGN denoiser or AR operator. However, it is clear from the discussion above, that this is an idealized scenario, in particular, when there is a distribution shift between training and testing data or when $D_\thetabm$ is an approximation of some true statistical estimator. Our contribution is a first technical discussion on this issue in the scenario where the CNN prior used for inference in SD-RED is an approximation of some true prior. While we use the SD-RED iterations as the basis for our DMBA and corresponding mathematical analysis, the results can be extended to other DMBAs, including those based on PnP-PGM or PnP-ADMM. In short, this work presents new theoretical analysis and numerical results that are both complementary to and backward compatible with the existing literature in the area.


\section{Theoretical Analysis}

We focus on the MBDA based on the following modified SD-RED iteration
\begin{equation}
\label{Eq:ApproxSDRED}
\xbm^k = \xbm^{k-1} - \gamma \widehat{G}(\xbm^{k-1}) \quad\text{with}\quad \Ghat(\xbm) \defn \nabla g(\xbm) + \tau (\xbm - \Dhat(\xbm)),
\end{equation}
where we refer to $\Dhat$ as a \emph{mismatched prior} that approximates some \emph{desired} or \emph{true prior} $D$. We denote as $\xbmast \in \Zer(G)$ the solution of SD-RED in~\eqref{Eq:REDItetation} using the true $D$. We write both operators as $D_\sigma$ and $\Dhat_\sigma$ when explicitly highlighting the \emph{strength parameter} $\sigma > 0$ used to control the regularization strength. This parameter can account for the variance $\sigma^2$ in the proximal operator in eq.~\eqref{eq:proximaloperator} and is analogous to the standard deviation parameter in the traditional PnP methods~\cite{Venkatakrishnan.etal2013}. We next present a theoretical analysis of SD-RED under a mismatched prior providing: (a) error bounds on the solutions computed by~\eqref{Eq:ApproxSDRED} and (b) statistical interpretations under $\Dhat$ approximating a proximal operator, possibly corresponding to MAP or MMSE estimators.
Our theoretical analysis builds on a set of explicitly specified assumptions that serve as sufficient conditions.
\begin{assumption}
\label{As:ConvDataFit}
The function $g$ is convex, continuous, and has a Lipschitz continuous gradient with constant $L > 0$.
\end{assumption}
This is a standard assumption in optimization and is relatively mild in the context of imaging inverse problems. For example, it is satisfied by many traditional data-fidelity terms, including those based on the least-squares loss.
\begin{assumption}
\label{As:IdealDenCont}
The operator $D$ is Lipschitz continuous with constant $0 < \lambda \leq 1$.
\end{assumption}
The Lipschitz continuity of CNN priors has been extensively considered in the prior work on PnP, DU, and DEQ and can be practically implemented using spectral normalization methods (see~\cite{Kamilov.etal2022} for a more detailed discussion in the context of PnP). We say that $D$ is a contractive operator when $\lambda < 1$ and it is a nonexpansive operator when $\lambda = 1$. Note also how the nonexpansiveness is \emph{only} assumed on the desired CNN prior $D$ rather than the mismatched one $\Dhat$ used for inference.
\begin{assumption}
\label{As:BoundedError}
The operator $\widehat{D}_\sigma$ satisfies
$$\|\widehat{D}_\sigma(\xbm)-D_\sigma(\xbm)\|_2 \leq \sigma \varepsilon, \quad \text{for all }\xbm \in \R^n,$$
where $\sigma > 0$ is the strength parameter of the prior and $\varepsilon > 0$ is some constant.
\end{assumption}
This assumption bounds the distance between the true and mismatched priors. We explicitly relate the bound to $\sigma$, since for small values of the strength parameter $\sigma$ it is natural for the CNN priors to act as identity. The constant $\varepsilon$ quantifies the approximation ability of the mismatched prior; given two approximations, the one with smaller $\varepsilon$ is expected to be a better match. Assumption~\ref{As:BoundedError} can be also justified by using statistical considerations. For example, Theorem~\ref{Thm:DistShift} below shows that when $D$ and $\Dhat$ are MAP denoisers, Assumption~\ref{As:BoundedError} is a direct consequence of a bound on the density ratio $p_\xbm/\phat_\xbm$. A natural consequence of this argument is that when both $D$ and $\Dhat$ are available, Assumption~\ref{As:BoundedError} can be a proxy to quantify prior distribution shifts.

We are now ready to state the first result.
\begin{theorem}
\label{Thm:Contraction}
Run SD-RED in~\eqref{Eq:ApproxSDRED} for $t \geq 1$ iterations under Assumptions~\ref{As:ConvDataFit}-\ref{As:BoundedError} with $\lambda < 1$ using a fixed step-size
$$0 < \gamma < \frac{(1-\lambda)\tau}{(L+(1+\lambda)\tau)^2}.$$
Then, there exists a unique $\xbmast \in \Zer(G)$ such that
$$\|\xbm^t - \xbmast\|_2 \leq \eta^t R_0 + \tau \sigma \varepsilon A,$$
where $0 < \eta < 1$, $R_0 \defn \|\xbm^0-\xbmast\|_2$, and $A \defn \gamma/(1-\eta)$ are constants independent of $t$.
\end{theorem}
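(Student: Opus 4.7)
The plan is to recognize this as a standard perturbed-fixed-point argument. I would first show that the exact SD-RED operator $T(\xbm)\defn \xbm-\gamma G(\xbm)$ is a strict contraction with constant $\eta<1$ under the given step-size condition, then bound the per-step discrepancy between $T$ and $\widehat{T}(\xbm)\defn \xbm-\gamma\widehat{G}(\xbm)$ uniformly in $\xbm$ using Assumption~\ref{As:BoundedError}, and finally unroll the resulting one-step recursion.

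For the contraction of $T$, I would decompose $G(\xbm)-G(\ybm) = (\nabla g(\xbm)-\nabla g(\ybm)) + \tau[(\xbm-D(\xbm))-(\ybm-D(\ybm))]$ and combine the monotonicity of $\nabla g$ (from convexity in Assumption~\ref{As:ConvDataFit}) with the fact that Assumption~\ref{As:IdealDenCont} makes $\Isf-D$ both $(1-\lambda)$-strongly monotone (via Cauchy--Schwarz) and $(1+\lambda)$-Lipschitz (via the triangle inequality). This yields
\begin{equation*}
\langle G(\xbm)-G(\ybm),\xbm-\ybm\rangle \geq \tau(1-\lambda)\|\xbm-\ybm\|_2^2,\qquad \|G(\xbm)-G(\ybm)\|_2 \leq (L+(1+\lambda)\tau)\|\xbm-\ybm\|_2.
\end{equation*}
Expanding $\|T(\xbm)-T(\ybm)\|_2^2 = \|\xbm-\ybm\|_2^2-2\gamma\langle G(\xbm)-G(\ybm),\xbm-\ybm\rangle+\gamma^2\|G(\xbm)-G(\ybm)\|_2^2$ and substituting these two estimates gives $\|T(\xbm)-T(\ybm)\|_2\leq \eta\|\xbm-\ybm\|_2$ with $\eta^2 \defn 1-2\gamma\tau(1-\lambda)+\gamma^2(L+(1+\lambda)\tau)^2$. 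The hypothesized $\gamma$ is exactly half of $2\tau(1-\lambda)/(L+(1+\lambda)\tau)^2$, which is the largest step-size at which $\eta=1$, so $\eta\in(0,1)$ with room to spare. Banach's fixed-point theorem then delivers a unique $\xbmast\in\Fix(T)=\Zer(G)$.

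For the perturbation step, observe that $\widehat{G}(\xbm)-G(\xbm)=\tau(D_\sigma(\xbm)-\widehat{D}_\sigma(\xbm))$, so Assumption~\ref{As:BoundedError} gives the uniform bound $\|\widehat{T}(\xbm)-T(\xbm)\|_2=\gamma\|\widehat{G}(\xbm)-G(\xbm)\|_2\leq \gamma\tau\sigma\varepsilon$ for every $\xbm\in\R^n$. Combining this with the contraction via the triangle inequality yields
\begin{equation*}
\|\xbm^k-\xbmast\|_2 \leq \|\widehat{T}(\xbm^{k-1})-T(\xbm^{k-1})\|_2 + \|T(\xbm^{k-1})-T(\xbmast)\|_2 \leq \gamma\tau\sigma\varepsilon+\eta\|\xbm^{k-1}-\xbmast\|_2,
\end{equation*}
and unrolling geometrically gives $\|\xbm^t-\xbmast\|_2 \leq \eta^t R_0 + \gamma\tau\sigma\varepsilon\sum_{k=0}^{t-1}\eta^k \leq \eta^t R_0 + \tau\sigma\varepsilon\,\gamma/(1-\eta)$, which is exactly the claimed bound with $A=\gamma/(1-\eta)$.

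The main obstacle will be the contraction estimate itself: algebraically identifying $\eta^2$ in closed form and verifying that the prescribed range of $\gamma$ forces $\eta\in(0,1)$ (while also ensuring $\eta^2\geq 0$, which reduces to $L+2\lambda\tau\geq 0$). Everything else is a textbook perturbed Banach iteration. Notably, no regularity of $\widehat{D}$ beyond the pointwise approximation bound in Assumption~\ref{As:BoundedError} enters the argument, which is precisely why the mismatched prior is not required to be nonexpansive or even Lipschitz.
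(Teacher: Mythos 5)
Your proposal is correct and follows essentially the same route as the paper: the strong monotonicity and Lipschitz constants of $G$ yield the contraction factor $\eta^2 = 1-2\gamma\tau(1-\lambda)+\gamma^2(L+(1+\lambda)\tau)^2$, the mismatch contributes a uniform per-step error $\gamma\tau\sigma\varepsilon$, and the recursion unrolls to the stated bound with $A=\gamma/(1-\eta)$. The only (welcome) addition is that you prove the full two-point contraction of $T$ and invoke Banach to justify the existence and uniqueness of $\xbmast$, a point the paper's proof leaves implicit.
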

The proof is provided in Appendix~\ref{Sec:Thm:Contraction}. Theorem~\ref{Thm:Contraction} shows that SD-RED using a mismatched CNN prior---either an AWGN denoiser or an AR operator---can approximate $\xbmast \in \Zer(G)$ up to an error term proportional to $\tau, \sigma$, and $\varepsilon$. Note that it is expected that the error shrinks for small values of $\tau$ and $\sigma$ since they control the influence of the CNN prior. While Theorem~\ref{Thm:Contraction} assumes that the true prior $\Dsf_\sigma$ is a contraction, our next result relaxes this condition by adopting an additional assumption.
\begin{assumption}
\label{As:BoundedIter}
The operator $G$ is such that $\Zer(G) \neq \varnothing$. There exists a finite number $R > 0$ such that for any $\xbmast \in \Zer(G)$, the iterates~\eqref{Eq:ApproxSDRED} satisfy
$$\|\xbm^t-\xbmast\|_2 \leq R\quad\text{for all}\quad t \geq 1.$$
\end{assumption}
\noindent
Assumption~\ref{As:BoundedIter} simply states that $G$ has a zero point in $\R^n$, which is equivalent to the assumption that SD-RED using the true prior has a solution. The assumption additionally states that the iterates generated via eq.~\eqref{Eq:ApproxSDRED} are bounded, which is natural for many imaging problems, since images usually have bounded pixel values in $[0, 1]$ or $[0, 255]$.
\begin{theorem}
\label{Thm:NonExp}
Run SD-RED in~\eqref{Eq:ApproxSDRED} for $t \geq 1$ iterations under Assumptions~\ref{As:ConvDataFit}-\ref{As:BoundedIter} with $\lambda = 1$ using a fixed step-size
$0 < \gamma < 1/(L+2\tau)$.
Then, we have that
$$\min_{1 \leq i \leq t} \|G(\xbm^{i-1})\|_2^2 \leq \frac{1}{t}\sum_{i = 1}^t \|G(\xbm^{i-1})\|_2^2 \leq \frac{B_1}{t} + \tau\sigma \varepsilon B_2,$$
where $B_1 \defn ((L+2\tau)R^2)/\gamma$ and $B_2 \defn (L+2\tau)(2 R + \gamma \tau\sigma \varepsilon)$ are constants independent of $t$.
\end{theorem}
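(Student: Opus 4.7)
The plan is to adapt the error-based descent argument to the borderline case $\lambda = 1$, where $I - D$ is only firmly nonexpansive rather than a strict contraction. The first step, which I expect to be the main technical lever, is to establish that the true operator $G = \nabla g + \tau(I - D)$ is cocoercive with constant $1/M$, where $M \defn L + 2\tau$. This is not immediate: simply adding the individual cocoercivity inequalities for $\nabla g$ and $\tau(I-D)$ only yields a weaker $1/(2\max(L, 2\tau))$ constant. To sharpen it, I would apply Baillon--Haddad on $\nabla g$ to get $\langle\nabla g(\xbm) - \nabla g(\ybm), \xbm-\ybm\rangle \geq (1/L)\|\abm\|_2^2$ with $\abm \defn \nabla g(\xbm) - \nabla g(\ybm)$; combine it with the $\tau$-scaled firm nonexpansiveness $\tau\langle (I-D)\xbm - (I-D)\ybm, \xbm-\ybm\rangle \geq (1/(2\tau))\|\tau \bbm\|_2^2$ with $\bbm \defn (I-D)\xbm - (I-D)\ybm$; and then apply Young's inequality $\|\abm + \tau \bbm\|_2^2 \leq (1+\epsilon)\|\abm\|_2^2 + (1 + 1/\epsilon)\|\tau \bbm\|_2^2$ with the optimized choice $\epsilon = 2\tau/L$. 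This delivers exactly $\|G(\xbm) - G(\ybm)\|_2^2 \leq M\,\langle G(\xbm) - G(\ybm), \xbm-\ybm\rangle$.

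With cocoercivity in hand, I would expand the iterate distance. Writing $\ebm^{k-1} \defn \Ghat(\xbm^{k-1}) - G(\xbm^{k-1})$, which by Assumption~\ref{As:BoundedError} satisfies $\|\ebm^{k-1}\|_2 \leq \tau\sigma\varepsilon$, standard algebra on $\|\xbm^k - \xbmast\|_2^2 = \|\xbm^{k-1} - \xbmast - \gamma\Ghat(\xbm^{k-1})\|_2^2$ splits into a ``clean'' descent piece $\gamma^2\|G(\xbm^{k-1})\|_2^2 - 2\gamma\langle G(\xbm^{k-1}), \xbm^{k-1}-\xbmast\rangle$ plus error cross-terms $2\gamma^2\langle G(\xbm^{k-1}), \ebm^{k-1}\rangle - 2\gamma\langle \ebm^{k-1}, \xbm^{k-1}-\xbmast\rangle + \gamma^2\|\ebm^{k-1}\|_2^2$. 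Cocoercivity bounds the clean piece above by $-\gamma(2/M - \gamma)\|G(\xbm^{k-1})\|_2^2$, and the stepsize restriction $\gamma < 1/M$ guarantees $\gamma(2/M - \gamma) > \gamma/M$, so the recursion contains a useful $-(\gamma/M)\|G(\xbm^{k-1})\|_2^2$ term. The error cross-terms are controlled by Cauchy--Schwarz together with $\|\ebm^{k-1}\|_2 \leq \tau\sigma\varepsilon$ and the a priori bound $\|\xbm^{k-1} - \xbmast\|_2 \leq R$ from Assumption~\ref{As:BoundedIter}; the stray $\|G\|$ factor inside $\langle G(\xbm^{k-1}), \ebm^{k-1}\rangle$ is absorbed by the $M$-Lipschitz bound $\|G(\xbm^{k-1})\|_2 \leq M\|\xbm^{k-1} - \xbmast\|_2 \leq MR$.

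Summing the resulting per-step inequality over $i = 1, \ldots, t$ telescopes the $\|\xbm^i - \xbmast\|_2^2$ sequence; the boundary contributions $\|\xbm^0 - \xbmast\|_2^2 \leq R^2$ and $-\|\xbm^t - \xbmast\|_2^2 \leq 0$ produce the $R^2$ that, after multiplying through by $M/(t\gamma)$, yields the $B_1/t = MR^2/(t\gamma)$ piece, while the accumulated error terms, which scale linearly in $t$, produce the $\tau\sigma\varepsilon B_2$ piece with $B_2 = M(2R + \gamma\tau\sigma\varepsilon)$. The minimum-over-$i$ inequality is then immediate since a minimum is bounded above by an average.

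The main obstacle, as noted above, is proving the sharp cocoercivity constant $1/M = 1/(L+2\tau)$; without the optimized Young's inequality step one only gets $1/(2\max(L,2\tau))$, which in turn would demand a tighter stepsize and would not reproduce the stated constants. A secondary concern is the bookkeeping on the error cross-terms: one has to split $\langle G(\xbm^{k-1}), \ebm^{k-1}\rangle$ and $\langle \ebm^{k-1}, \xbm^{k-1} - \xbmast\rangle$ in a way that only introduces $\|G\|$ factors handled by the Lipschitz bound $\|G(\xbm^{k-1})\|_2 \leq MR$, with no spurious $\|G\|_2^2$ terms that would otherwise leak back into and corrupt the principal descent coefficient $\gamma/M$.
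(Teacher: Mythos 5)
Your overall strategy is the same as the paper's: establish that $G$ is $1/(L+2\tau)$-cocoercive, use this to obtain a descent inequality carrying a $-(\gamma/(L+2\tau))\|G(\xbm)\|_2^2$ term, perturb by the bounded error $\Ghat-G$, and telescope the average. Your derivation of the sharp cocoercivity constant via Baillon--Haddad plus Young's inequality with $\epsilon = 2\tau/L$ is correct and is a legitimate alternative to the paper's route, which instead writes $\Isf - \tfrac{2}{L+2\tau}G$ as a convex combination of the nonexpansive operators $D = \Isf - R$ and $\Isf - (2/L)\nabla g$ (Proposition~\ref{Prop:CoCoerciveG}); the two arguments deliver the identical constant. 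The one place your bookkeeping does not reproduce the stated result is the error cross-term: splitting $-2\gamma\langle\ebm^{k-1},\,\xbm^{k-1}-\xbmast-\gamma G(\xbm^{k-1})\rangle$ into $-2\gamma\langle\ebm^{k-1},\xbm^{k-1}-\xbmast\rangle + 2\gamma^2\langle\ebm^{k-1},G(\xbm^{k-1})\rangle$ and bounding the second piece by $2\gamma^2(L+2\tau)R\,\tau\sigma\varepsilon$ leaves you, after multiplying through by $(L+2\tau)/\gamma$ and summing, with $B_2 = (L+2\tau)\bigl(2R(1+\gamma(L+2\tau)) + \gamma\tau\sigma\varepsilon\bigr)$, strictly larger than the claimed $(L+2\tau)(2R+\gamma\tau\sigma\varepsilon)$ (by up to a factor of two on the $2R$ part, since $\gamma(L+2\tau)<1$). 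The paper avoids this slack by keeping the cross-term intact and applying Cauchy--Schwarz directly to $2\gamma\langle G(\xbm^{k-1})-\Ghat(\xbm^{k-1}),\,\xbm^{k-1}-\gamma G(\xbm^{k-1})-\xbmast\rangle$, using that the descent inequality already gives $\|\xbm^{k-1}-\gamma G(\xbm^{k-1})-\xbmast\|_2 \leq \|\xbm^{k-1}-\xbmast\|_2 \leq R$; with that single change your argument yields exactly the stated $B_1$ and $B_2$.
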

The proof is provided in Appendix~\ref{Sec:Thm:NonExp}. Theorem~\ref{Thm:NonExp} is more general since it relaxes the assumption that $D$ is a contraction to it being a nonexpansive operator. This comes with a cost of a slower sublinear convergence of the first term, compared to the linear convergence of the corresponding term under a contractive prior. Both theorems are compatible with the prior work on DMBAs in the sense that by setting $\varepsilon = 0$ one recovers the traditional convergence results in the literature~\cite{Sun.etal2018a, Sun.etal2019b, Ryu.etal2019, Gilton.etal2021}.

Theorems~\ref{Thm:Contraction} and~\ref{Thm:NonExp} establish general error bounds on the solutions of SD-RED using a mismatched operator $\Dhat$, under the assumption that for the same input the distance between the outputs of $\Dhat$ and $D$ are bounded. The following result provides a statistical interpretation to this assumption by considering denoisers that perform MAP estimation.
\begin{theorem}
\label{Thm:DistShift}
Let $p_\xbm$ and $\phat_\xbm$ denote two log-concave continuous probability density functions supported over $\R^n$, and $D_\sigma$ and $\Dhat_\sigma$ denote corresponding MAP estimators for the AWGN problem in eq.~\eqref{Eq:MAPDenoise}. Let $r \defn p_\xbm/\phat_\xbm$ denote the density ratio of $p$ and $\phat$. If $\mathrm{exp}(-\varepsilon^2/2) \leq r(\xbm) \leq \mathrm{exp}(\varepsilon^2/2)$ for all $\xbm \in \R^n$, then we have that  $\|D_\sigma(\xbm)-\Dhat_\sigma(\xbm)\|_2 \leq \sigma \varepsilon$ for all $\xbm \in \R^n$.
\end{theorem}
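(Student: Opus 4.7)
\medskip\noindent
\textbf{Proof plan for Theorem~\ref{Thm:DistShift}.}

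The plan is to recast each MAP denoiser as the unique minimizer of a strongly convex objective, then bound the distance between the two minimizers using a standard pair of strong-convexity inequalities, and finally translate the resulting gap into the density ratio $r$.

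First, I would write the MAP denoisers explicitly. With $h \defn -\log p_\xbm$ and $\hhat \defn -\log \phat_\xbm$, define
\begin{equation*}
F_\zbm(\xbm) \defn \tfrac{1}{2\sigma^2}\|\xbm-\zbm\|_2^2 + h(\xbm), \qquad \Fsfhat_\zbm(\xbm) \defn \tfrac{1}{2\sigma^2}\|\xbm-\zbm\|_2^2 + \hhat(\xbm),
\end{equation*}
so that $\xbmast \defn D_\sigma(\zbm) = \argmin F_\zbm$ and $\xbmhat \defn \Dhat_\sigma(\zbm) = \argmin \Fsfhat_\zbm$. Because $p_\xbm,\phat_\xbm$ are log-concave, $h$ and $\hhat$ are convex, so both $F_\zbm$ and $\Fsfhat_\zbm$ are $1/\sigma^2$-strongly convex, which in particular ensures both minimizers exist and are unique.

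Next, I would invoke strong convexity at each minimizer. Since $\nabla F_\zbm(\xbmast) = 0$, strong convexity gives
\begin{equation*}
F_\zbm(\xbmhat) - F_\zbm(\xbmast) \;\geq\; \tfrac{1}{2\sigma^2}\|\xbmhat-\xbmast\|_2^2,
\end{equation*}
and analogously, since $\nabla \Fsfhat_\zbm(\xbmhat) = 0$,
\begin{equation*}
\Fsfhat_\zbm(\xbmast) - \Fsfhat_\zbm(\xbmhat) \;\geq\; \tfrac{1}{2\sigma^2}\|\xbmhat-\xbmast\|_2^2.
\end{equation*}
Adding these two inequalities causes the quadratic terms $\tfrac{1}{2\sigma^2}\|\cdot-\zbm\|_2^2$ to cancel, leaving
\begin{equation*}
\bigl[h(\xbmhat)-\hhat(\xbmhat)\bigr] \;-\; \bigl[h(\xbmast)-\hhat(\xbmast)\bigr] \;\geq\; \tfrac{1}{\sigma^2}\|\xbmhat-\xbmast\|_2^2.
\end{equation*}

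Finally, I would identify the left-hand side with the log density ratio: $h(\xbm) - \hhat(\xbm) = -\log r(\xbm)$. The hypothesis $\exp(-\varepsilon^2/2) \leq r \leq \exp(\varepsilon^2/2)$ translates to $|\log r| \leq \varepsilon^2/2$, so the left-hand side is at most $\varepsilon^2$, giving $\|\xbmhat-\xbmast\|_2^2 \leq \sigma^2 \varepsilon^2$ and therefore $\|D_\sigma(\zbm) - \Dhat_\sigma(\zbm)\|_2 \leq \sigma\varepsilon$. Since $\zbm$ was arbitrary, the bound holds on all of $\R^n$.

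The only step that requires a bit of care is the symmetric two-sided use of strong convexity; a naive one-sided bound via $F_\zbm(\xbmhat) - F_\zbm(\xbmast) \leq \varepsilon^2$ alone would only yield $\|\xbmhat-\xbmast\|_2 \leq \sqrt{2}\,\sigma\varepsilon$, losing the factor $\sqrt{2}$. Adding the two inequalities and exploiting the cancellation of the quadratic terms is what recovers the sharp constant stated in the theorem.
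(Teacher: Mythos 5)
Your proposal is correct and follows essentially the same route as the paper's proof: both recast $D_\sigma$ and $\Dhat_\sigma$ as unique minimizers of strongly convex objectives, apply strong convexity at each minimizer, add the two inequalities so the quadratic terms cancel, and bound the remaining difference $h-\hhat$ by $\varepsilon^2/2$ via the density-ratio hypothesis. The only cosmetic difference is your normalization of the objective by $1/\sigma^2$ versus the paper's $1$-strongly-convex form with the $\sigma^2 h$ term.
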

The proof is provided in Appendix~\ref{Sec:Thm:DistShift}. Theorem~\ref{Thm:DistShift} shows that if two prior distributions $p_\xbm$ and $\phat_\xbm$ are close to each other, then the distance between corresponding MAP denoisers is small, finally leading to a small error terms in Theorems~\ref{Thm:Contraction} and~\ref{Thm:NonExp}. It is worth mentioning that the density ratio is a common tool for quantifying the distances between probability densities and is used, for example, in the Kullback–Leibler divergence  $\E_{p_\xbm}[\log(r(\xbm))]$.

The next result enables a statistical interpretation to Theorems~\ref{Thm:Contraction} and~\ref{Thm:NonExp}. This is due to the fact that both MAP and MMSE denoisers for the AWGN problem in eq.~\eqref{Eq:MAPDenoise} can be expressed as proximal operators~\cite{Gribonval2011, Gribonval.Machart2013, Xu.etal2020}. Using this result one can obtain a novel interpretation of PnP, DU, and DEQ algorithms under mismatched priors as using CNNs approximating true priors corresponding to some statistical estimators. For the rest of this section, we set $\tau = 1/\sigma^2$ to simplify the mathematical analysis and consider the true prior of form $D_\sigma = \prox_{\sigma^2 h}$, where the function $h$ satisfies the following assumption.
\begin{assumption}
\label{As:LipschitzReg}
The function $h$ is closed, proper, convex, and Lipschitz continuous with constant $S > 0$.
\end{assumption}
This assumption is commonly adopted in nonsmooth optimization and implies the existence of a global upper bound on subgradients~\cite{ouyang2013, Yu2013, Boyd.Vandenberghe2008}. It is satisfied by a large number of functions, including the $\ell_1$-norm and TV regularizers. We are now ready to state the final result, which can be seen as an extension of the analysis in~\cite{Sun2019b} to mismatched CNN priors.
\begin{theorem}
\label{Thm:ApproxStatEst}
Run SD-RED in~\eqref{Eq:ApproxSDRED} for $t \geq 1$ iterations under Assumptions~\ref{As:ConvDataFit}-\ref{As:LipschitzReg} using a fixed step-size
$0 < \gamma < 1/(L+2\tau)$.
Then, we have that
$$\min_{1 \leq i \leq t} (f(\xbm^{i-1})-f^\ast) = \frac{1}{t}\sum_{i = 1}^t (f(\xbm^{i-1}) - f^\ast) \leq \frac{2(L+2\tau)R^3}{\gamma t} + \frac{\epsilon^2 R}{\sigma^2} + \frac{S^2 \sigma^2}{2} ,$$
where $f = g + h$, $D_\sigma = \prox_{\sigma^2 h}$, and $\tau = 1/\sigma^2$.
\end{theorem}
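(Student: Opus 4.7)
The plan is to reduce the statement to a convergence result for inexact gradient descent on a smoothed surrogate of $f$. The crucial observation is that when $D_\sigma = \prox_{\sigma^2 h}$ and $\tau = 1/\sigma^2$, the Moreau-envelope identity gives $\tau(\xbm - D_\sigma(\xbm)) = \nabla M_{\sigma^2}h(\xbm)$, where $M_{\sigma^2}h$ denotes the Moreau envelope of $h$ with parameter $\sigma^2$. Setting $F \defn g + M_{\sigma^2}h$, the true operator in~\eqref{Eq:REDItetation} becomes $G = \nabla F$, and the mismatched iteration~\eqref{Eq:ApproxSDRED} reads $\xbm^k = \xbm^{k-1} - \gamma(\nabla F(\xbm^{k-1}) + \ebm^{k-1})$, with gradient error $\ebm^{k-1} \defn \tau(D_\sigma(\xbm^{k-1}) - \Dhat(\xbm^{k-1}))$ satisfying $\|\ebm^{k-1}\|_2 \leq \tau\sigma\varepsilon = \varepsilon/\sigma$ by Assumption~\ref{As:BoundedError}. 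Under Assumptions~\ref{As:ConvDataFit}--\ref{As:LipschitzReg}, $F$ is convex and its gradient is Lipschitz with constant at most $L+2\tau$, so the step-size condition $\gamma < 1/(L+2\tau)$ places me in the standard descent-lemma regime; moreover, Assumption~\ref{As:BoundedIter} gives $\Zer(G) = \argmin F \neq \varnothing$ with $\|\xbm^k - \xbmast\|_2 \leq R$ for some $\xbmast \in \argmin F$.

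Next I quantify the gap between $F$ and $f$. Because $h$ is $S$-Lipschitz by Assumption~\ref{As:LipschitzReg}, a standard Moreau-envelope estimate yields
$$h(\xbm) - \tfrac{1}{2}S^2\sigma^2 \leq M_{\sigma^2}h(\xbm) \leq h(\xbm), \quad \xbm \in \R^n,$$
where the upper bound comes from evaluating the infimum defining $M_{\sigma^2}h$ at $\ybm = \xbm$, and the lower bound from combining $h(D_\sigma(\xbm)) \geq h(\xbm) - S\|\xbm - D_\sigma(\xbm)\|_2$ with the quadratic penalty term and minimizing over $\|\xbm - D_\sigma(\xbm)\|_2$. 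Consequently $F^\ast \leq f^\ast$ and $f(\xbm) - f^\ast \leq F(\xbm) - F^\ast + S^2\sigma^2/2$, which accounts for the $S^2\sigma^2/2$ term in the claim.

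The rest is an inexact convex gradient-descent analysis applied to $F$. Expanding $\|\xbm^k - \xbmast\|_2^2$ via the iteration, I use convexity of $F$ to lower-bound $\langle\nabla F(\xbm^{k-1}), \xbm^{k-1} - \xbmast\rangle \geq F(\xbm^{k-1}) - F^\ast$, control $|\langle\ebm^{k-1}, \xbm^{k-1} - \xbmast\rangle|$ by Cauchy--Schwarz together with Assumption~\ref{As:BoundedIter}, and bound $\gamma^2\|\Ghat(\xbm^{k-1})\|_2^2$ by combining the Lipschitz-gradient estimate $\|\nabla F(\xbm^{k-1})\|_2 \leq (L+2\tau)R$ (valid because $\nabla F(\xbmast) = \zerobm$) with $\|\ebm^{k-1}\|_2 \leq \varepsilon/\sigma$. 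Telescoping from $i=1$ to $t$, using $\gamma(L+2\tau) \leq 1$ to absorb the quadratic-in-$\gamma$ contribution, and dividing by $\gamma t$ produces a bound of the form
$$\frac{1}{t}\sum_{i=1}^t (F(\xbm^{i-1}) - F^\ast) \leq \frac{2(L+2\tau)R^3}{\gamma t} + \frac{\varepsilon^2 R}{\sigma^2}.$$
Combining with the Moreau-envelope gap from the previous paragraph, and using $\min_{1 \leq i \leq t}(\cdot) \leq \frac{1}{t}\sum(\cdot)$, yields the stated bound on $f(\xbm^{i-1}) - f^\ast$.

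The main obstacle I expect is bookkeeping so that the constants land on exactly $(L+2\tau)R^3/(\gamma t)$ and $\varepsilon^2 R/\sigma^2$ rather than cruder variants such as $R^2/(\gamma t)$ and $\varepsilon R/\sigma$. Matching this exact shape will require a carefully weighted Young's-inequality split of the cross-terms $\langle\ebm^{k-1}, \xbm^{k-1} - \xbmast\rangle$ and $\langle\nabla F(\xbm^{k-1}), \ebm^{k-1}\rangle$ (with weights tuned to $R$ or $\sigma^2$ so that the linear-in-$\varepsilon$ piece is absorbed into the others), together with an explicit substitution of $\|\nabla F\|_2 \leq (L+2\tau)R$ inside the $\|\Ghat\|_2^2$ bound, which is what naturally creates the extra factor of $R$ in the leading term.
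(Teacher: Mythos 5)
Your setup is exactly the paper's: with $\tau=1/\sigma^2$ and $D_\sigma=\prox_{\sigma^2 h}$, the iteration is an inexact gradient step on $f_{\sigma^2}=g+\tau h_{\sigma^2}$ (your $F$), whose gradient is $(L+2\tau)$-Lipschitz, and the $S^2\sigma^2/2$ term comes from the uniform Moreau-smoothing gap under the $S$-Lipschitz assumption on $h$. Where you diverge is in the core convergence argument, and that is where the proposal has a genuine gap. You propose the distance-telescoping analysis, expanding $\|\xbm^k-\xbmast\|_2^2$ and lower-bounding $\langle\nabla F(\xbm^{k-1}),\xbm^{k-1}-\xbmast\rangle$ by convexity. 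This unavoidably produces the cross term $\langle\ebm^{k-1},\xbm^{k-1}-\xbmast\rangle$, which after Cauchy--Schwarz contributes $O(\varepsilon R/\sigma)$ per iteration --- \emph{linear} in $\varepsilon$. The stated error term is $\varepsilon^2 R/\sigma^2$, quadratic in $\varepsilon$, and your proposed fix (a weighted Young's split) cannot bridge this: $\frac{\varepsilon R}{\sigma}\le \frac{c}{2}\cdot\frac{\varepsilon^2}{\sigma^2}+\frac{R^2}{2c}$ always leaves an additive remainder that is independent of $\varepsilon$ and does not vanish with $t$, so it cannot be absorbed into either the $1/t$ term or the $S^2\sigma^2/2$ term. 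Your bound would therefore be of the form $\frac{C_1}{t}+\frac{\varepsilon R}{\sigma}+\frac{S^2\sigma^2}{2}$, which neither matches nor implies the claimed inequality.

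The paper sidesteps this by never forming that cross term. It applies the descent lemma to $f_{\sigma^2}$ along the \emph{inexact} step and completes the square: $\|\Ghat(\xbm)\|_2^2-2\nabla f_{\sigma^2}(\xbm)^\Tsf\Ghat(\xbm)=\|\Ghat(\xbm)-\nabla f_{\sigma^2}(\xbm)\|_2^2-\|\nabla f_{\sigma^2}(\xbm)\|_2^2\le(\tau\sigma\varepsilon)^2-\|\nabla f_{\sigma^2}(\xbm)\|_2^2$, so the mismatch enters only through $\|\Ghat-\nabla f_{\sigma^2}\|_2^2$ and is quadratic in $\varepsilon$ from the start. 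Telescoping the resulting per-iteration inequality bounds $\frac{1}{t}\sum_i\|\nabla f_{\sigma^2}(\xbm^{i-1})\|_2^2$, and convexity together with Assumption~\ref{As:BoundedIter} converts this back into function-value suboptimality (this conversion is also where the extra factor of $R$ and the $(L+2\tau)R^2$ bound on $f_{\sigma^2}(\xbm^0)-f_{\sigma^2}^\ast$ produce the $R^3$ in the leading term). To repair your argument you would need to replace the distance recursion by this descent-lemma/gradient-norm route, or otherwise eliminate the linear-in-$\varepsilon$ cross term; as written, the plan proves a weaker statement than the theorem.
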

The proof is given in Appendix~\ref{Sec:Thm:ApproxStatEst}. It states that SD-RED using a mismatched CNN prior $\Dhat_\sigma$, which approximates some proximal operator (possibly corresponding to MAP or MMSE statistical estimators), can approximate the minimum $f^\ast$ up to an error term that is a function of $\epsilon$ and $\sigma^2$. The error term is minimized to $\epsilon S \sqrt{2R}$ when $\sigma^2 = \sqrt{2\epsilon^2 R/S^2}$.

To conclude this section, we theoretically analyzed the SD-RED iteration in eq.~\eqref{Eq:ApproxSDRED}, where a mismatched prior $\Dhat$ is used instead of the desired or true prior $D$. Our analysis shows that one can get explicit error bounds on the solution of DMBAs that depend on the level of mismatch. Our analysis also reveals  in the context of MAP estimation that the obtained error bounds can be related to the prior distribution shifts. In the next section, we provide a set of numerical results illustrating the empirical performance of SD-RED under distribution shifts and approximate proximal operators.


\section{Numerical Evaluation}

There has been significant interest in understanding the performance of CNN priors for image recovery from noisy and sub-sampled measurements. The recent work on DMBAs has shown that CNN priors can lead to significant improvements over traditional image priors such as TV in a wide variety of inverse problems. The results presented in this section evaluate the performance under mismatched CNN priors obtained due to shifts in the data distribution (for example, training on natural images but testing on MRI images) and due to approximate proximal operators (for example, training a CNN as an empirical MAP or MMSE estimator). It is worth mentioning that our focus is to highlight the impact of mismatched CNN priors, not to justify SD-RED as a DL architecture (such justifications can be found elsewhere, for example, see~\cite{Romano.etal2017, Reehorst.Schniter2019, Kamilov.etal2022}).

We present three distinct sets of simulations on image recovery from subsampled Fourier measurements: \emph{(a) using severely mismatched CNN priors corresponding to AWGN denoisers and AR operators trained on MRI, CT, and natural images}; \emph{(b) using CNNs trained to approximate the traditional TV proximal operator}; and \emph{(c) using moderately mismatched CNN priors for MRI trained on a different anatomical regions or MRI modalities}. While our simulations focus on subsampled measurements without noise, we expect that the reported relative performances will be preserved for moderate levels of noise. In all simulations, we use the classical TV regularization as a representative of the traditional model-based image reconstruction~\cite{Beck.Teboulle2009}. 

\begin{figure}[t]
\centering\includegraphics[width=0.5\textwidth]{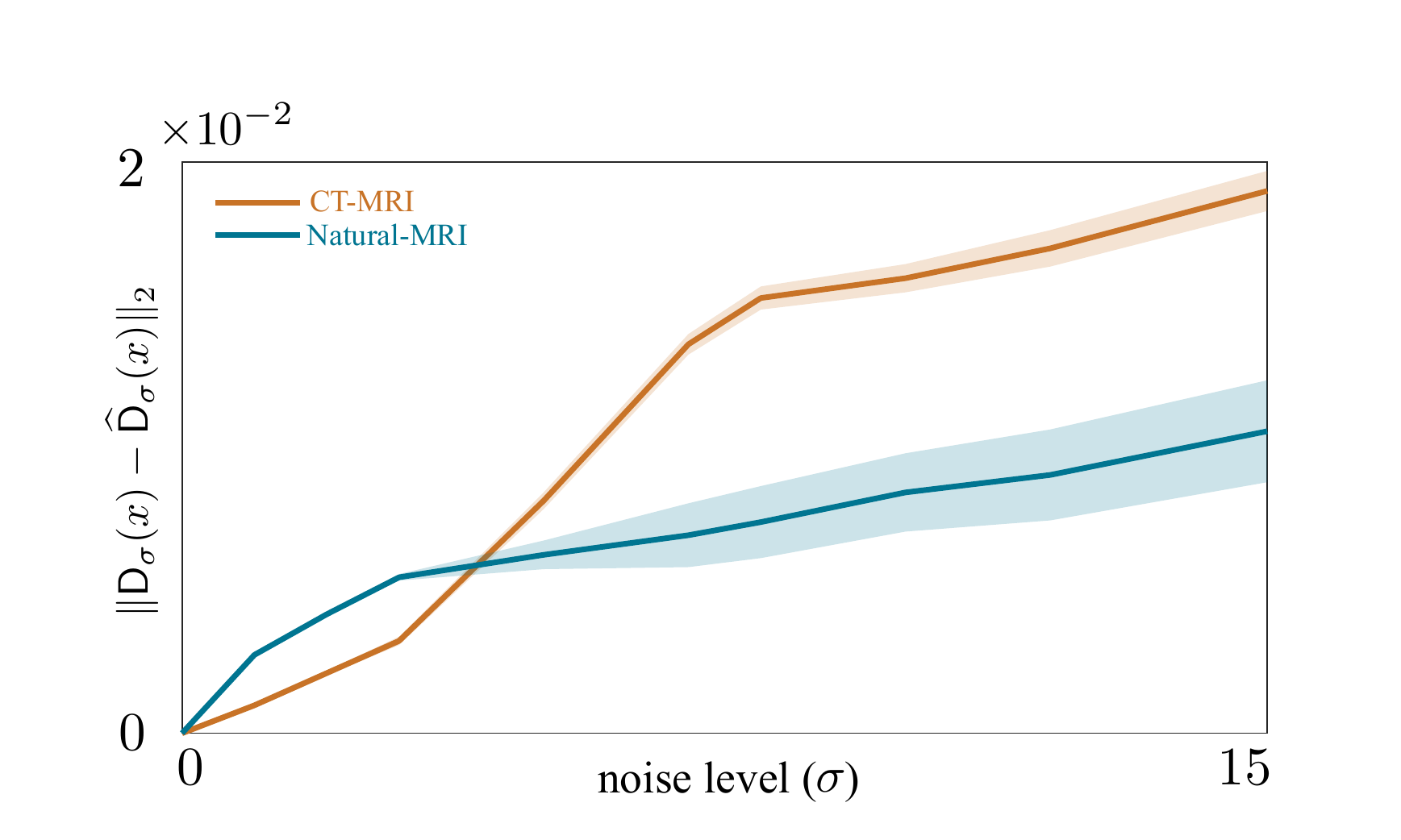}
\caption{\small Numerical evaluation of the mismatch between CT and natural image CNN priors with respect to the true MRI prior on MRI test images at various noise levels $\sigma$. As supposed in Assumption~\ref{As:BoundedError} the distance between the mismatched and true CNN priors is bounded and proportional to the noise level $\sigma$.}
\label{Fig:Dist}
\end{figure}

\subsection{Mismatched Priors from Training on CT, Natural, and MR Images}

We consider image recovery from subsampled Fourier measurements $\ybm = \Abm \xbm \in \C^m$, where $\Abm$ performs radial Fourier sampling~\cite{Lustig.etal2008}. The measurement operator can be written as $\Abm = \Pbm \Fbm$, where $\Fbm$ is the Fourier transform and $\Pbm$ as a diagonal sampling matrix. We train three classes of CNN priors modeling different empirical data distrbutions: (a) natural grayscale images from~\cite{Martin.etal2001}, (b) brain images from~\cite{zhang2018ista}, and (c) CT images from the \emph{low dose CT grand challenge} of Mayo Clinic~\cite{mccollough2016tu}.
Ten  $180 \times 180$ images from Set11 were randomly chosen as natural test images. From 50 slices of $256 \times 256$ images  provided for testing in~\cite{zhang2018ista}, ten random images were chosen as MRI test images. Ten random CT images of $512 \times 512$ were chosen as CT test images. The recovery performance of SD-RED using all three classes of CNN priors, as well as the traditional TV regularizer~\cite{Beck.Teboulle2009}, is quantified using \emph{peak-signal-to-noise ratio (PSNR)} in dB and \emph{structural similarity index measure (SSIM)}. 

Beyond the data distribution considerations, we also consider two types of CNN priors: (i) AWGN denoisers and (ii) AR operators. AWGN denoisers are extensively used within PnP due to their simplicity and effectiveness, while AR operators have been widely reported to achieve state-of-the-art imaging performance~\cite{Kamilov.etal2022, Liu.etal2021b}. We train one AWGN denoiser for each noise level $\sigma \in \{1, 2, 3, 5, 7, 8, 10, 12, 15\}$ and image distribution (natural, MRI, and CT) using the DnCNN architecture with batch normalization layers removed (as was done in~\cite{Liu.etal2021b}).  The architecture of the AR operators is identical to that of the AWGN denoisers. The AR operators are trained vis DEQ using pre-trained AWGN denoisers for initialization~\cite{Gilton.etal2021}. We train one AR operator for each sampling ratio considered in the simulations ($10\%$, $20\%$, and $30\%$) and image distribution (natural, MRI, and CT). Nesterov~\cite{Nesterov2004} and Anderson~\cite{Anderson.etal2009} acceleration techniques are used in forward and backward DEQ iterations for faster convergence. Spectral normalization is used for controlling the Lipschitz constants of all our CNNs~\cite{Miyato.etal2018, Liu.etal2021b}. For each experiment, we select $\sigma$ and $\tau$ achieving the best PSNR performance.

\begin{figure}[t]
\centering\includegraphics[width=0.9\textwidth]{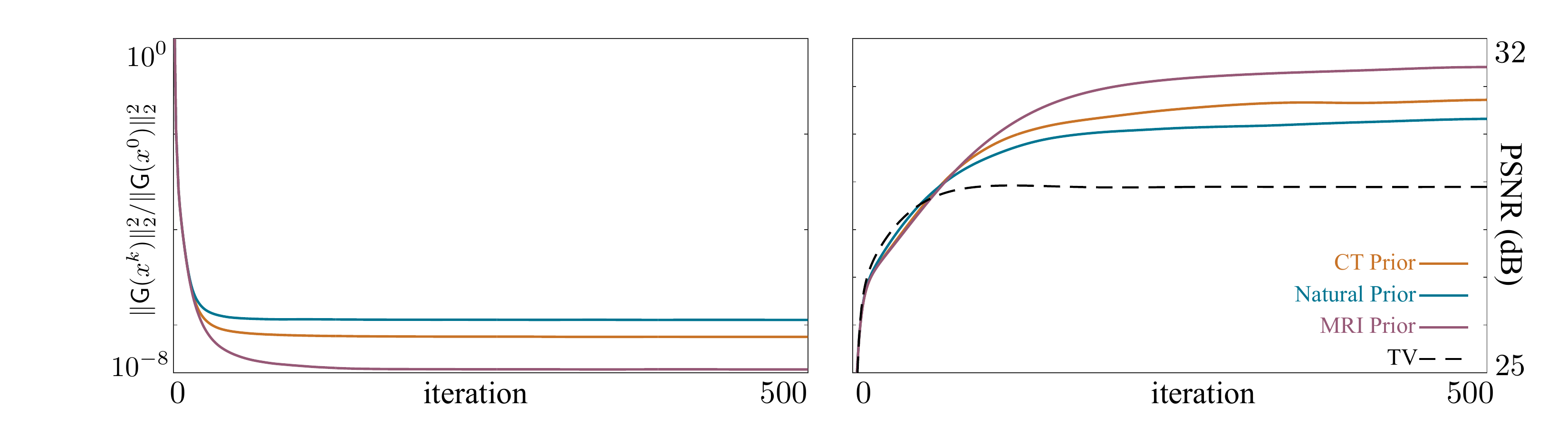}
\caption{\small \textbf{Left:} Empirical evaluation of convergence of SD-RED using the true and mismatched AWGN priors for brain MRI reconstruction at $10\%$ sampling. Average normalized distance to $\Zer(G)$ is plotted against the iterations. \textbf{Right:} PSNR (dB) is plotted against iterations for TV and three AWGN priors. Note the effect of mismatched priors on the convergence of SD-RED to $\Zer(G)$ and the superior performance of mismatched CNN priors over TV.}
\label{Fig:gnorm}
\vspace{-.5em}   
\end{figure}

\begin{figure}[t]
\centering\includegraphics[width=1.01\textwidth]{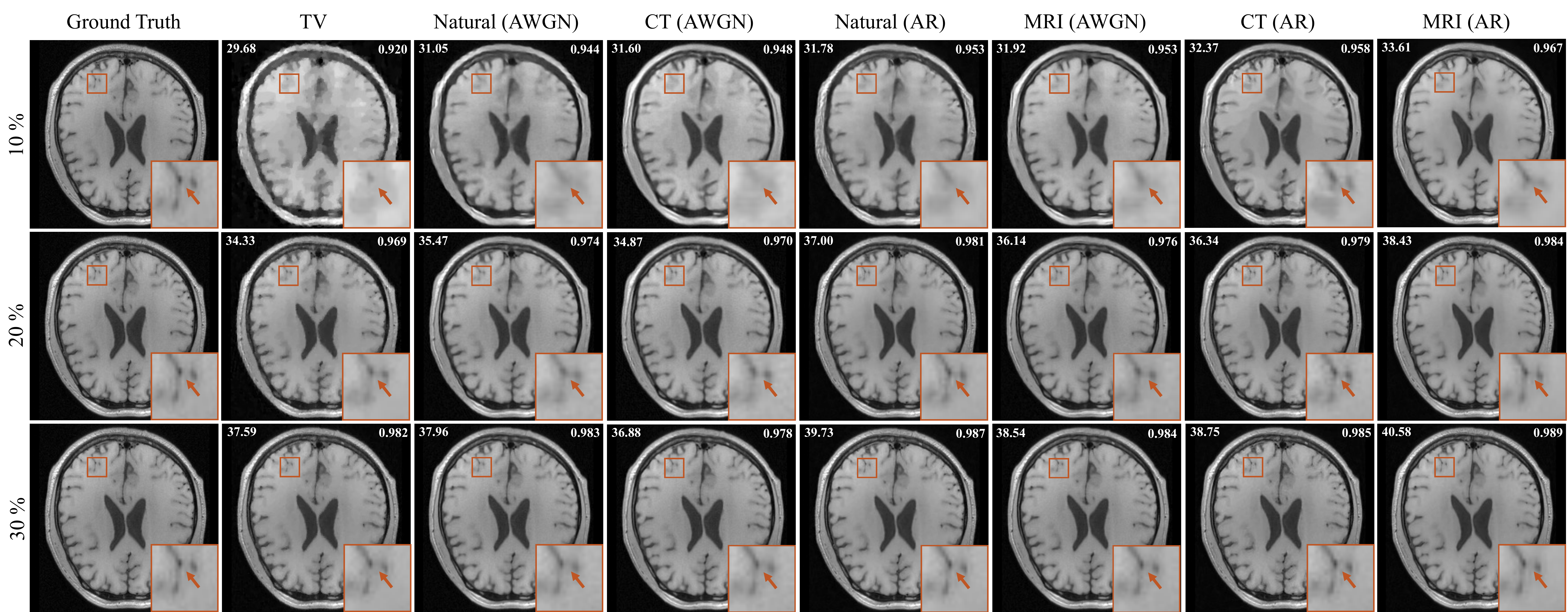}
\caption{\small Visual evaluation of several AWGN and AR priors used within SD-RED for brain MRI reconstruction from radially subsampled Fourier measurements at $10\%$ , $20\%$, and $30\%$. Note how all the learned priors (both true and mismatched) outperform the traditional TV prior. Additionally, note the performance improvement due to the mismatched AR priors compared to the true and mismatched AWGN priors.}
\label{Fig:mri}
\vspace{-.5em}   
\end{figure}

Figure~\ref{Fig:Dist} plots the distances between the outputs of the true AWGN denoisers trained on MRI images and mismatched ones trained on CT and natural images. The plot is generated using MRI test images. Average distance of the denoisers is plotted against noise level $\sigma$. Shaded area illustrates the range of values obtained across all test images. Our theoretical analysis assumes that the distance between the outputs of CNN priors is bounded and proportional to $\sigma$. Figure~\ref{Fig:Dist} numerically shows that the distance between the CNN priors is indeed bounded with a constant proportional to the noise level.

\begin{figure}[t]
\centering\includegraphics[width=1\textwidth]{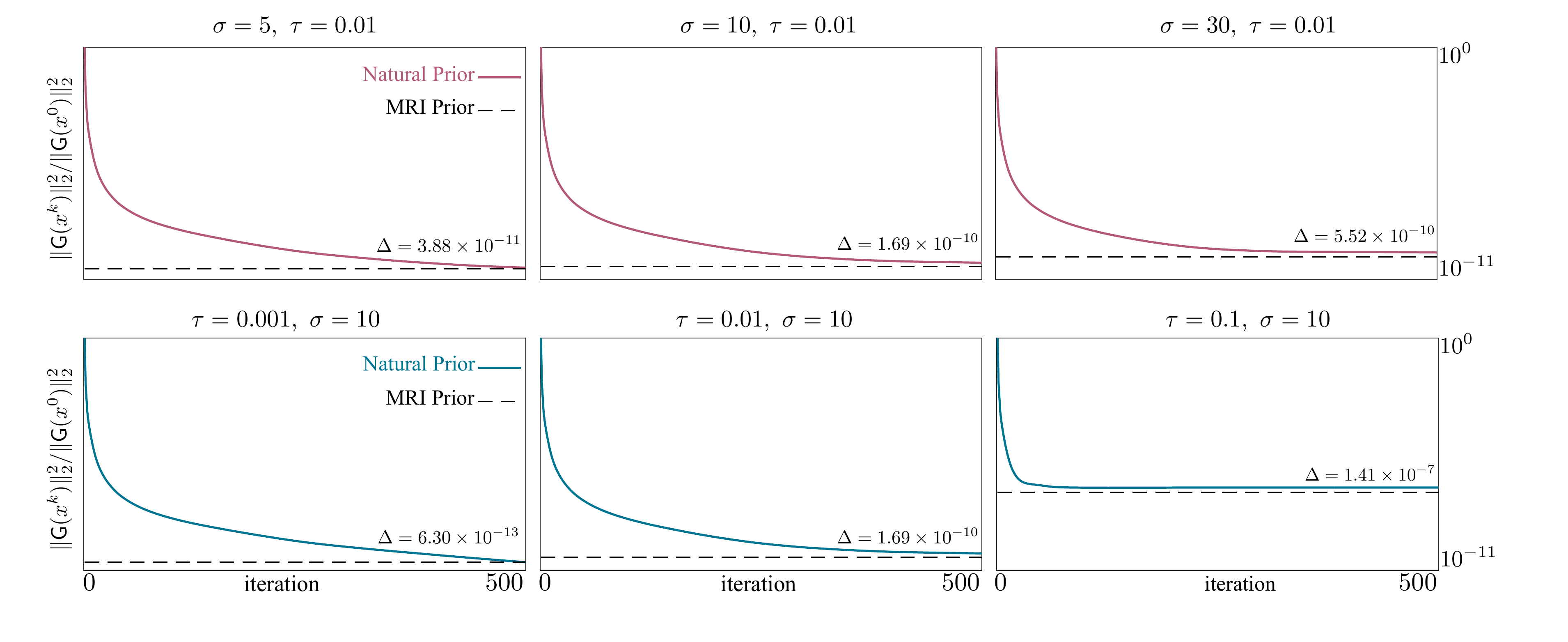}
\caption{\small Illustration of influence of noise level ($\sigma$) and regularization parameter ($\tau$) on the error term in the convergence analysis of SD-RED under mismatched priors. Average normalized distance to $\Zer(G)$ is plotted against iterations for MRI and natural AWGN priors for MRI reconstruction at $10\%$ sampling. \textbf{Top:} $\tau$ is set to a constant to evaluate the effect of $\sigma$. \textbf{Bottom:} $\sigma$ is set to a constant to evaluate the effect of $\tau$. The gap illustrates the error term due to the use of mismatched CNN priors. Note how the gap increases with the increase of both parameters $\tau$ and $\sigma$.}
\label{Fig:tau_sigma}
\vspace{-.5em}   
\end{figure}

Theorem \ref{Thm:NonExp} states that SD-RED with shifted priors converges to an element of $\Zer (G)$ up to an error term that depends on $\tau$, $\sigma$, and $\epsilon$. Figure~\ref{Fig:gnorm} illustrates the convergence of SD-RED with CT, natural, and MRI AWGN priors on MRI test data under $10\%$ subsampling ratio. The average value of 
$\|G(\xbm^k)\|_2^2/\|G(\xbm^0)\|_2^2$ and PSNR (dB) are plotted against iterations of the algorithm. The distance to the zero-set quantifies the convergence behavior of the algorithm and is expected to be smaller for matched priors compared to mismatched priors. For reference, we also provide the evolution of the PSNR values obtained using TV reconstruction. Figure~\ref{Fig:tau_sigma} shows the influence of $\tau$ and $\sigma$ on the convergence of SD-RED using a mismatched natural AWGN prior to $\Zer(G)$, where $G$ uses the true MRI AWGN prior at $10\%$ subsampling. The average value of 
$\|G(\xbm^k)\|_2^2/\|G(\xbm^0)\|_2^2$  is plotted against iteration number for $\sigma \in \{5, 10, 30\}$ and $\tau \in \{0.001 , 0.01 , 0.1\}$. The results are consistent with the theoretical analysis and show that increase in both $\tau$ and $\sigma$ increases the error term.

Table~\ref{Tab:table1} reports the recovery PSNR for natural, CT, and MRI test images from subsampled Fourier measurements using the corresponding true and mismatched CNN priors. The best PSNR values for AWGN and AR priors are highlighted separately in bold. Figure~\ref{Fig:mri} presents corresponding visual comparisons on a MRI image with $10\%$, $20\%$, and $30\%$ sampling. As expected, matched priors lead to better performance in all the experiments, with matched AR priors achieving the best performance. While mismatched priors result in performance drops for both AWGN and AR priors, we observe better overall results for AR priors. Moreover, it can be observed that the recovery at higher sampling ratios is less vulnerable to the distribution shifts in the prior. Finally, note the inferior performance of TV compared to SD-RED under both true and mismatched CNN priors. 

\begin{table}[t]
\caption{The recovery of natural, MRI, and CT images in terms of PSNR (dB) using the true and mismatched AR and AWGN priors.}
    \centering
    \renewcommand\arraystretch{1.2}
    {\footnotesize
    \scalebox{0.96}{
    \begin{tabular*}{13.45cm}{L{68pt} L{26pt}lL{26pt}lC{26pt}lC{26pt}lC{26pt}l}
        \hline
        \multirow{2}{4em}{\textbf{Method}}& \multicolumn{3}{c}{\textbf{MRI} }& \multicolumn{3}{c}{\textbf{Natural}} & \multicolumn{3}{c}{\textbf{CT}} \\
        \cline{2-10}
        &\multicolumn{1}{c}{10\%}  & \multicolumn{1}{c}{20\%} &\multicolumn{1}{c}{30\%} & \multicolumn{1}{c}{10\%} &\multicolumn{1}{c}{20\%} &\multicolumn{1}{c}{30\%} &\multicolumn{1}{c}{10\%} &\multicolumn{1}{c}{20\%} &\multicolumn{1}{c}{30\%} &\\\cline{1-10}
        
        \textbf{TV}       & {28.89}  &   {33.64} & \multicolumn{1}{c:}{36.79}  
        & {23.12}  & {27.35} & \multicolumn{1}{c:}{30.47}
        &{31.58}     &{39.11} &{41.39}\\
        \cdashline{1-10}
         \textbf{MRI (AWGN)} & {\textbf{31.40}} & {\textbf{35.53}} & \multicolumn{1}{c:}{\textbf{37.92}}
         & {24.15}&{29.26} & \multicolumn{1}{c:}{\textbf{32.01}}
         &{35.39} & {39.86} & {42.49} \\  
         \textbf{Natural (AWGN)} & {30.32} & {34.80}&\multicolumn{1}{c:}{37.36}
         & {\textbf{25.15}}&{\textbf{29.35}} & \multicolumn{1}{c:}{31.99}
         &{35.41} & {39.76} & {42.32} \\ 
         \textbf{CT (AWGN)} & {30.71} & {34.23} & \multicolumn{1}{c:}{36.27}
         & {22.45}&{28.39} & \multicolumn{1}{c:}{31.71}
         &\textbf{35.68} & \textbf{39.97} & \textbf{42.74} \\ 
         \cdashline{1-10}
         \textbf{MRI (AR)} & {\textbf{32.39}} & {\textbf{37.45}} & \multicolumn{1}{c:}{\textbf{39.73}}
         & {22.73}&{27.59} & \multicolumn{1}{c:}{31.27}
         &{35.24} & {\textbf{42.61}} & {46.00} \\ 
         \textbf{Natural (AR)} & {31.08} & {36.11} & \multicolumn{1}{c:}{38.80}
         & {\textbf{25.28}}&{\textbf{29.92}} & \multicolumn{1}{c:}{\textbf{33.03}}
         &{36.35} & {42.32} & {45.89} \\ 
         \textbf{CT (AR)} & {31.11} & {35.44} & \multicolumn{1}{c:}{37.96}
         & {21.39}&{27.17} & \multicolumn{1}{c:}{28.79}
         &{\textbf{38.35}} & {42.47} & {\textbf{46.94}} \\ \hline
        
    \end{tabular*}}
    }
\label{Tab:table1}
\end{table}

\subsection{Approximating the TV proximal operator using a CNN}

In this section, we numerically evaluate CNN priors trained to approximate proximal operators. To that end, we train DnCNN to approximate the TV proximal operator and use the trained network within SD-RED as a mismatched CNN prior. We will refer to the mismatched prior as \emph{TV Approx} and the true prior as \emph{TV Exact}. We generate the training dataset by adding AWGN to natural grayscale images from~\cite{Martin.etal2001}. We pre-train one CNN for each noise level $\sigma \in \{1,2,3,5,7, 8, 10, 12, 15\}$ by using the TV solution with optimized regularization parameter as the training label. We consider the same recovery problem as in the previous section, where the goal is to recover an image from its subsampled Fourier measurements at $10\%$, $20\%$, and $30\%$ sampling rates. Theorem \ref{Thm:ApproxStatEst} shows that SD-RED using \emph{TV Approx} can approximate the solution of \eqref{Eq:OptimizationForInverseProblem} up to an error term. This behaviour is illustrated in Figure~\ref{Fig:objective} (Left) for natural images reconstructed at $10\%$ subsampling.  The average value of $ f(\xbm ^k) /f(\xbm ^0)$ is plotted against iterations. Figure~\ref{Fig:objective} (Right) shows that the approximate TV prior can indeed achieve performance similar to the true TV prior. These plots highlight that despite the constant error term in the objective function due to the prior mismatch, the approximation can still lead to nearly identical PSNR and SSIM values. Figure~\ref{Fig:inexact} illustrates the recovery of two test images at three sampling rates, highlighting the ability of TV Approx to match the performance of the true TV prior.

\begin{figure}[t]
\centering\includegraphics[width=1.0\textwidth]{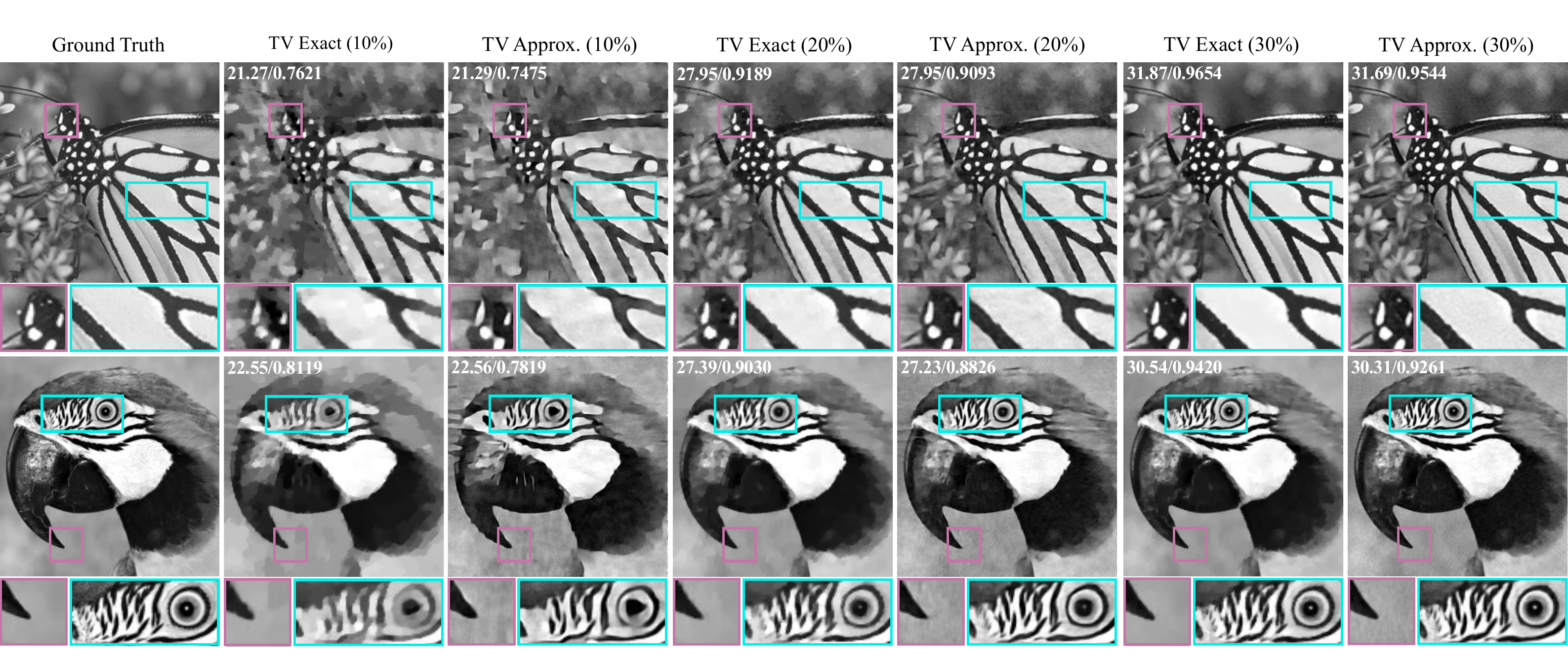}
\caption{\small Recovery of \emph{Butterfly} and \emph{Parrot} from $10\%$, $20\%$, and $30\%$ Fourier samples using SD-RED under DnCNN trained as an approximate TV prior. Results of the traditional TV are also provided. Note the visual and quantitative similarities between the exact and approximate TV results at all sampling ratios.}
\label{Fig:inexact}
\vspace{-.5em}   
\end{figure}
 \begin{figure}[t]
\centering\includegraphics[width=0.8\textwidth]{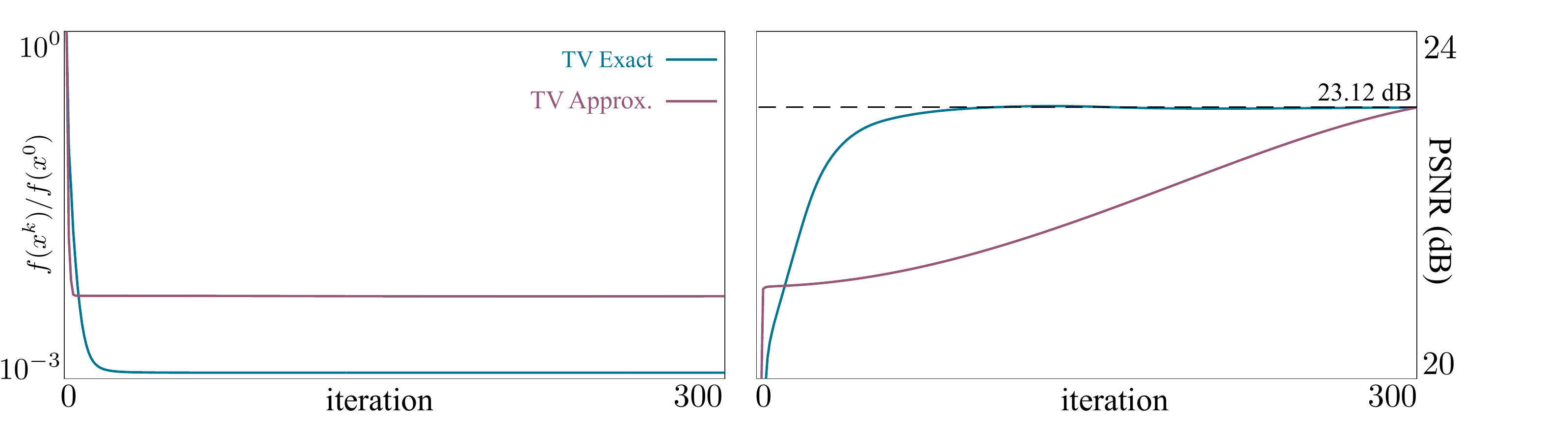}
\caption{\small \textbf{Left:}~Evolution of the average TV objective using the true TV prior and its DnCNN approximation over the test set of natural images. \textbf{Right:} Evolution of average PSNR values on the true and approximate TV priors over the same test set of natural images. Note that despite the constant error term predicted in the objective due to the use of mismatched priors, the approximate TV prior achieves nearly identical PSNR compared to the true TV prior.}
\label{Fig:objective}
\vspace{-.5em}   
\end{figure}

\subsection{Using Mismatched MRI Priors for Accelerated Parallel MRI}

In this section, we consider only MRI images to evaluate the impact of CNN priors trained on data with moderate distribution shifts. Our measurement model uses multi-coil Cartesian Fourier sampling mask from the fastMRI challenge \cite{knoll2020fastmri} with $2\times$ and $4\times$ accelerated acquisitions. 
The measurement operator for each coil can be written as $\Abm_i = \Pbm \Fbm \Sbm_i $, where $\Pbm$ is a subsampling mask, $\Fbm$ is the Fourier transform, and $\Sbm_i$ is a coil sensitivity map. We use datasets in~\cite{zhang2018ista} and~\cite{knoll2020fastmri} to pre-train four separate AWGN denoisers on brain, knee, AXT1PRE, and AXT2 images. The denoisers are trained on noise levels corresponding to $\sigma \in \{1, 2, 3, 5, 8, 10, 12, 15\}$. We select the denoiser yielding the best performance in terms of PSNR as the prior for SD-RED. Ten images are randomly chosen from each dataset for testing. In each experiment, 128 coil sensitivity maps are synthesized using SigPy \cite{Ong.etal2019}.
Table~\ref{Tab:table2} reports PSNR and SSIM for $2\times$ and $4\times$ accelerated MRI reconstruction using SD-RED. Figure~\ref{Fig:APMRI} presents a visual comparison on one test image from the fastMRI AXT2 dataset using several CNN priors. The results indicate that mismatched CNN priors can be useful in the settings where true CNN priors are not available. In particular, one can observe how mismatched priors can outperform the traditional TV reconstruction in all settings.

\begin{figure}[t]
\centering\includegraphics[width=1\textwidth]{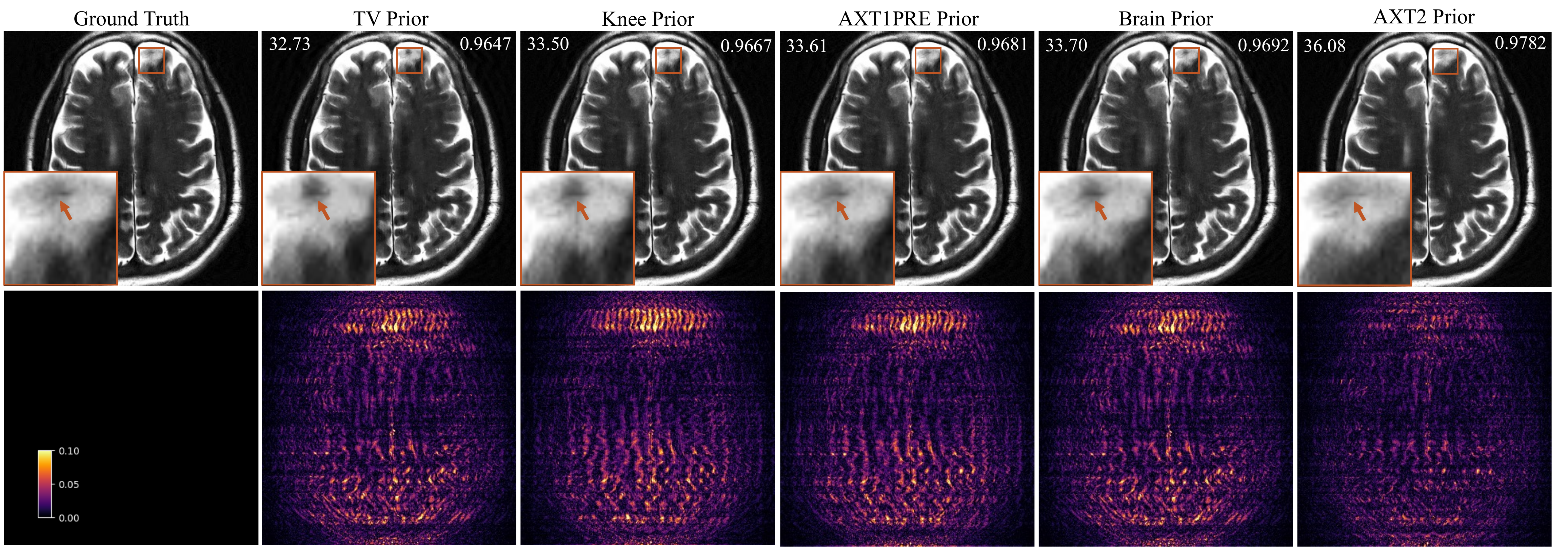}
\caption{\small Image recovery from $4\times$ accelerated parallel MRI measurements of an image from AXT2 fastMRI dataset using several CNN priors trained as AWGN denoisers. Note how despite the performance drop due to the distribution mismatched, all CNN priors significantly outperform the traditional TV regularizer.}
\label{Fig:APMRI}
\vspace{-.5em}   
\end{figure}

\begin{table}[t]
\caption{Recovery of MRI AXT2 and AXT1PRE test images from accelerated parallel MRI meak using several image priors.}
    \centering
    \renewcommand\arraystretch{1.2}
    {\footnotesize
    \scalebox{0.96}{
    \begin{tabular*}{13.2cm}{L{70pt} C{31pt}lC{31pt}lC{31pt}lC{31pt}lC{31pt}lC{31pt}lC{31pt}l}
        \hline
        \multirow{2}{4em}{\textbf{Method}}& \multicolumn{4}{c}{\textbf{MRI AXT2~\cite{knoll2020fastmri}} } & \multicolumn{4}{c}{\textbf{MRI AXT1PRE~\cite{knoll2020fastmri}}} \\
        \cline{2-9}
        &\multicolumn{2}{c}{2x}  & \multicolumn{2}{c}{4x} & \multicolumn{2}{c}{2x} & \multicolumn{2}{c}{4x} \\\hline
        \textbf{TV} & {40.78} & \multicolumn{1}{c:}{0.993}& {33.30}  & \multicolumn{1}{c:}{0.968}& {45.06} & \multicolumn{1}{c:}{0.995} & {36.70}&{0.968} \\
         \cdashline{1-9}
                
         \textbf{Knee Prior}       & 41.99          & \multicolumn{1}{c:}{0.994}           &   34.37        & \multicolumn{1}{c:}{0.971}     & 46.70 & \multicolumn{1}{c:}{0.996}           & 38.56  & 0.981 \\
         \textbf{AXT2 Prior}      &  \textbf{44.09}         & \multicolumn{1}{c:}{\textbf{0.996}}        &    \textbf{36.10} & \multicolumn{1}{c:}{\textbf{0.977} }    
         &  47.03& \multicolumn{1}{c:}{0.977}           & 38.84 &  0.982 \\ 
          \textbf{AXT1PRE Prior}    & \textcolor{black}{41.77}            & \multicolumn{1}{c:}{\textcolor{black}{0.994}}            & {34.36}            & \multicolumn{1}{c:}{0.970}   & \textbf{47.84} & \multicolumn{1}{c:}{\textbf{0.997}}           & \textbf{39.65}   &  \textbf{0.984}\\
         \textbf{Brain MRI Prior}      & {41.61}            & \multicolumn{1}{c:}{0.993}            & {34.03}            & \multicolumn{1}{c:}{0.967}   & 46.97 & \multicolumn{1}{c:}{0.977}            & 38.83   & 0.981
            \\\hline
    \end{tabular*}}
    }
\label{Tab:table2}
\end{table}


\section{Conclusion}

In this work, we explored the topic of mismatched CNN priors within DMBAs by theoretically analyzing the error bounds due to the mismatch and numerically illustrating the impact of the mismatch on the imaging performance. While our focus has been on the SD-RED architecture, similar results can certainly be carried out using other DMBAs. Our results show how the severity of the mismatch on the CNN prior translates to that of the final recovered images, relate the mismatch in CNN priors to the distribution shifts in statistical priors, and highlight the potential of DMBAs using mismatched CNN priors to outperform the traditional TV regularizer. In the future work, we would like to explore alternative characterizations of the mismatch and develop methods to reduce the influence of mismatch on the final recovery performance.


\section{Appendix}

We adopt \emph{monotone operator theory}~\cite{Bauschke.Combettes2017, Ryu.Boyd2016} as a framework for a unified analysis of SD-RED under mismatched priors. In Appendix~\ref{Sec:Thm:Contraction}, we analyze SD-RED under the assumption that the true CNN prior $D$ is a contraction. In Appendix~\ref{Sec:Thm:NonExp}, we extend the analysis to nonexpansive operators $D$. In Appendix~\ref{Sec:Thm:DistShift}, we show how the bound on the density ratio $r = p_\xbm/\phat_\xbm$ leads to the bound in Assumption~\ref{As:BoundedError}. In Appendix~\ref{Sec:Thm:ApproxStatEst}, we analyze SD-RED under approximate proximal operators. In Appendices~\ref{Sec:UsefulResults} and~\ref{Sec:BackgroundTheory}, we present some key results from monotone operator theory and traditional optimization that are useful for our analysis. 


\subsection{Proof of Theorem~\ref{Thm:Contraction}}
\label{Sec:Thm:Contraction}

First note that
$$\|G(\xbm)-\Ghat(\xbm)\|_2 = \tau\|(I-D)(\xbm)-(I-\Dhat) (\xbm)\|_2 \leq \tau \sigma \epsilon.$$
Consider a single iteration $\xbm^+ = \xbm-\gamma \Ghat(\xbm)$ and $\xbmast \in \Zer(G)$
\begin{align*}
\|\xbm^+-\xbmast\|_2 &= \|\xbm-\gamma \Ghat(\xbm)-\xbmast\|_2 \leq \|\xbm-\gamma G(\xbm)-\xbmast\|_2 + \gamma \|G(\xbm)-\Ghat(\xbm)\|_2 \\
&\leq \eta \|\xbm-\xbmast\|_2 + \gamma \tau \sigma \epsilon,
\end{align*}
where in the first inequality we used the triangular inequality and in the second Proposition~\ref{Prop:MonotoneConvIter} in Appendix~\ref{Sec:UsefulResults}.
By iterating this inequality for $t \geq 1$ iterations, we obtain
$$\|\xbm^t-\xbmast\|_2 \leq \eta^t\|\xbm^0-\xbmast\|_2 + \tau \sigma \epsilon A,$$
where $\eta^2 \defn 1-2\gamma [\tau(1-\lambda)] + \gamma^2 [L+(1+\lambda)\tau]^2 \in (0, 1)$ and $A \defn \gamma/(1-\eta)$ are fixed constants.


\subsection{Proof of Theorem~\ref{Thm:NonExp}}
\label{Sec:Thm:NonExp}

Consider a single iteration $\xbm^+ = \xbm-\gamma \Ghat(\xbm)$ and $\xbmast \in \Zer(G)$ 
\begin{align*}
\|\xbm^+-\xbmast\|_2^2 
&= \|\xbm - \gamma \Ghat(\xbm)-\xbmast\|_2^2 \\
&= \|\xbm - \gamma G(\xbm)-\xbmast\|_2^2 + 2\gamma (G(\xbm)-\Ghat(\xbm))^\Tsf(\xbm - \gamma G(\xbm)-\xbmast) + \gamma^2 \|G(\xbm)-\Ghat(\xbm)\|_2^2 \\
&\leq \|\xbm-\xbmast\|_2^2 - \frac{\gamma}{L+2\tau}\|G(\xbm)\|_2^2 + 2\gamma \|G(\xbm)-\Ghat(\xbm)\|_2 \|\xbm - \gamma G(\xbm)-\xbmast\|_2 + \gamma^2 \|G(\xbm)-\Ghat(\xbm)\|_2^2 \\
&\leq \|\xbm-\xbmast\|_2^2 - \frac{\gamma}{L+2\tau}\|G(\xbm)\|_2^2 + 2\gamma \tau \sigma \epsilon  R + \gamma^2 \tau^2 \sigma^2 \epsilon^2.
\end{align*}
By rearranging the terms, we get
$$\|G(\xbm)\|_2^2 \leq \left(\frac{L+2\tau}{\gamma}\right)\left[\|\xbm-\xbmast\|_2^2 - \|\xbm^+-\xbmast\|_2^2\right] + (L+2\tau)(2\tau\sigma \epsilon R+\gamma\tau^2\sigma^2\epsilon^2).$$
By averaging over $t \geq 1$ iterations, we get
$$\frac{1}{t}\sum_{i = 1}^t \|G(\xbm^{i-1})\|_2^2 \leq \frac{B_1}{t} + \tau \sigma \epsilon B_2,$$
where $B_1 \defn ((L+2\tau)R^2)/\gamma$ and $B_2 \defn (L+2\tau)(2 R + \gamma \tau\sigma \epsilon)$.


\subsection{Proof of Theorem~\ref{Thm:DistShift}}
\label{Sec:Thm:DistShift}

The MAP solution to~\eqref{Eq:MAPDenoise} can be expressed as the proximal operator~\eqref{eq:proximaloperator}. Consider two density functions $p_\xbm$ and $\phat_\xbm$ and corresponding MAP regularizers $h(\xbm) = -\log(p_\xbm(\xbm))$ and $\hhat(\xbm) = -\log(\phat_\xbm(\xbm))$. The log-concavity and continuity of $p_\xbm$ and $\phat_\xbm$ imply that $\prox_{\sigma^2 h}$ and $\prox_{\sigma^2 \hhat}$ are unique minimizers of $1$-strongly convex functions $\phi(\xbm) = \frac{1}{2}\|\xbm-\zbm\|_2^2 + \sigma^2 h(\xbm)$ and $\phihat(\xbm) = \frac{1}{2}\|\xbm-\zbm\|_2^2 + \sigma^2 \hhat(\xbm)$, respectively. From the definition of the proximal operator, $\phi$ and $\phihat$ are minimized at $\xbmast = D_\sigma(\zbm) = \prox_{\sigma^2 h}(\zbm)$ and $\xbmhat = \Dhat_\sigma(\zbm) = \prox_{\sigma^2 \hhat}(\zbm)$, respectively, where $\zbm \in \R^n$ is any vector. From strong convexity
\begin{equation}
\label{Eq:VecFuncBound}
\begin{matrix}
\phi(\xbmhat) \geq \phi(\xbmast) + \frac{1}{2}\|\xbmast-\xbmhat\|_2^2 \\
\phihat(\xbmast) \geq \phihat(\xbmhat) + \frac{1}{2}\|\xbmast-\xbmhat\|_2^2
\end{matrix}
\quad\Rightarrow\quad \|\xbmast-\xbmhat\|_2^2 \leq \sigma^2 (\hhat(\xbmast)-h(\xbmast)+h(\xbmhat) - \hhat(\xbmhat)).
\end{equation}
We can re-write the bound on the density ratio as
\begin{align*}
&\mathrm{e}^{-\varepsilon^2/2} \leq p_\xbm(\xbm)/\phat_\xbm(\xbm) \leq  \mathrm{e}^{\varepsilon^2/2} \quad\Rightarrow\quad -\varepsilon^2/2 \leq \log(p_\xbm(\xbm)) - \log(\phat_\xbm(\xbm)) \leq \varepsilon^2/2 \quad\Rightarrow\quad |h(\xbm)-\hhat(\xbm)| \leq \varepsilon^2/2.
\end{align*}
By combining this inequality with~\eqref{Eq:VecFuncBound}, we get the desired conclusion
$$\|D_\sigma(\zbm) - \Dhat_\sigma(\zbm)\|_2^2 \leq \sigma^2 \epsilon^2,$$
which is true for any $\zbm \in \R^n$.


\subsection{Proof of Theorem~\ref{Thm:ApproxStatEst}}
\label{Sec:Thm:ApproxStatEst}

Moreau smoothing is well-known in the optimization literature and has been extensively used to design and analyze non-smooth algorithms (see, for example, \cite{Yu2013}). It has been previously used in Theorem~2 of~\cite{Sun2019b} to analyze the block-coordinate RED (called BC-RED). The contribution of the following analysis is to extend~\cite{Sun2019b} the prior result to a mismatched CNN prior $\Dhat$. Following~\cite{Sun2019b}, we fix $\tau = 1/\sigma^2$ for convenience, which enables us to have only one regularization parameter, namely $\sigma^2$.

We define the following loss function to approximate $f = g + h$
$$f_{\sigma^2}(\xbm) = g(\xbm) + \tau h_{\sigma^2}(\xbm),$$
where we set $\tau = (1/\sigma^2)$ and $h_{\sigma^2}$ is known as the Moreau envelope of $h$
\begin{equation}
\label{Eq:MorSmoothBound}
h_{\sigma^2}(\xbm) \defn \min_{\vbm \in \R^n} \left\{\frac{1}{2}\|\vbm-\xbm\|_2^2 + \sigma^2 h(\vbm)\right\}.
\end{equation}
Lemma~\ref{Lem:UniformBoundMoreau} and Assumption~\ref{As:LipschitzReg} imply that
\begin{equation}
\label{Eq:Thm4SmoothingBound}
0 \leq h(\xbm) - \tau h_{\sigma^2}(\xbm) \leq \frac{S \sigma^2}{2} \quad\Leftrightarrow\quad 0 \leq f(\xbm) - f_{\sigma^2}(\xbm) \leq \frac{S \sigma^2}{2}.
\end{equation}
Additionally, Lemma~\ref{Lem:GradMorProxRes} implies that
\begin{equation}
\label{Eq:SDREDProxUpd}
G_\sigma(\xbm) = \nabla g(\xbm) + \tau (\xbm-\prox_{\sigma^2 h}(\xbm)) = \nabla f_{\sigma^2}(\xbm),
\end{equation}
where $\nabla f_\sigma^2$ is $(L+2\tau)$-Lipschitz continuous. Eq.~\eqref{Eq:SDREDProxUpd} implies that a single iteration of SD-RED using the true CNN prior is a gradient step on the smoothed version $f_{\sigma^2}$ of $f$. From~\eqref{Eq:MorSmoothBound} and the convexity of the Moreau envelope, we have for all $\xbm \in \R^n$
\begin{equation}
\label{Eq:Thm4MinBound}
f_{\sigma^2}^\ast = f_{\sigma^2}(\xbmast) \leq f_{\sigma^2}(\xbm) \leq f(\xbm),
\end{equation}
where $\xbmast \in \Zer(G)$. Hence, there exists a finite $f^\ast$ such that $f(\xbm) \geq f^\ast$ with $f^\ast \geq f_{\sigma^2}^\ast$ for all $\xbm \in \R^n$.

Consider a single SD-RED update using a mismatched prior
$\xbm^+ = \xbm - \gamma \Ghat(\xbm)$
\begin{align}
\nonumber f_{\sigma^2}(\xbm^+) &\leq f_{\sigma^2}(\xbm) + \nabla f_{\sigma^2}^\Tsf(\xbm^+ - \xbm) + \frac{(L+2\tau)}{2}\|\xbm^+ - \xbm\|_2^2 \\
\nonumber&= f_{\sigma^2}(\xbm) - \gamma \nabla f_{\sigma^2}(\xbm)^\Tsf\Ghat(\xbm) + \frac{\gamma^2(L+2\tau)}{2}\|\Ghat(\xbm)\|_2^2 \\
\nonumber&\leq f_{\sigma^2}(\xbm) + \frac{\gamma}{2}\left[\|\Ghat(\xbm)\|_2^2 - 2 \nabla f_{\sigma^2}(\xbm)^\Tsf\Ghat(\xbm)\right]\\
\label{Eq:Thm4GradBound}&\leq f_{\sigma^2}(\xbm) - \frac{\gamma}{2}\|\nabla f_{\sigma^2}(\xbm)\|_2^2 + \frac{\gamma \tau^2\sigma^2\epsilon^2}{2},
\end{align}
where in first inequality we used the Lipschitz continuity of $\nabla f_{\sigma^2}$, in the second the fact that $\gamma \leq 1/(L+2\tau)$, and the third
$$\|\Ghat(\xbm)-\nabla f_{\sigma^2}(\xbm)\|_2 \leq \tau\sigma\epsilon \quad\Leftrightarrow\quad \|\Ghat(\xbm)\|_2^2 - 2\nabla f_{\sigma^2}(\xbm)^\Tsf\Ghat(\xbm) \leq (\tau \sigma \epsilon)^2 - \|\nabla f_{\sigma^2}(\xbm)\|_2^2.$$
Consider the iteration $t \geq 1$, then we have that
\begin{align*}
&\min_{1 \leq i \leq t} (f(\xbm^{i-1})-f^\ast) \leq \frac{1}{t} \sum_{i = 1}^t (f(\xbm^{i-1})-f^\ast) \leq \frac{1}{t} \sum_{i = 1}^t (f_{\sigma^2}(\xbm^{i-1})-f_{\sigma^2}^\ast) + \frac{S^2\sigma^2}{2} \\
&\leq  \frac{R}{t} \sum_{i = 1}^t \|\nabla f_{\sigma^2}(\xbm^{i-1})\|_2^2 + \frac{S^2 \sigma^2}{2} \leq \frac{2R}{\gamma t} (f_{\sigma^2}(\xbm^0)-f_{\sigma^2}^\ast) + \frac{\epsilon^2R}{\sigma^2} + \frac{S^2 \sigma^2}{2} \\
&\leq \frac{2 R^3 (L+2\tau)}{\gamma t} + \frac{\epsilon^2 R}{\sigma^2} + \frac{S^2 \sigma^2}{2},
\end{align*}
where in the second inequality we used eq.~\eqref{Eq:Thm4SmoothingBound} and~\eqref{Eq:Thm4MinBound}, in the third the convexity of $f_{\sigma^2}$ and Assumption~\ref{As:BoundedIter}, in the forth eq.~\eqref{Eq:Thm4GradBound} and that $\tau = 1/\sigma^2$, and in the final the convexity of $f_{\sigma^2}$ and Assumption~\ref{As:BoundedIter}.


\subsection{Useful results for the main theorems}
\label{Sec:UsefulResults}

\begin{proposition}
\label{Prop:CoCoerciveG}
Suppose Assumptions~\ref{As:ConvDataFit}-\ref{As:IdealDenCont} are true. Then, $G = \nabla g + \tau R$  is $(1/(L+2\tau))$-cocoercive.
\end{proposition}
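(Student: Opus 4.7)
The plan is to write $G = \nabla g + \tau R$ with $R \defn I - D$, show each summand is cocoercive, and combine them via a Young-type inequality. For the first summand, Assumption~\ref{As:ConvDataFit} together with the Baillon--Haddad theorem gives that $\nabla g$ is $(1/L)$-cocoercive. For the second summand, Assumption~\ref{As:IdealDenCont} with $\lambda\leq 1$ makes $D$ nonexpansive, and a direct expansion shows that $R = I - D$ is $(1/2)$-cocoercive: setting $\dbm \defn D(\xbm) - D(\ybm)$,
$$\langle R(\xbm) - R(\ybm),\, \xbm - \ybm \rangle - \tfrac{1}{2}\|R(\xbm) - R(\ybm)\|_2^2 = \tfrac{1}{2}\bigl(\|\xbm - \ybm\|_2^2 - \|\dbm\|_2^2\bigr) \geq 0.$$
Scaling by $\tau$, this makes $\tau R$ a $(1/(2\tau))$-cocoercive operator.

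To combine the two pieces, I would set $\bm{u} \defn \nabla g(\xbm) - \nabla g(\ybm)$ and $\vbm \defn \tau\bigl(R(\xbm) - R(\ybm)\bigr)$, so $G(\xbm) - G(\ybm) = \bm{u} + \vbm$. Summing the two individual cocoercivity bounds produces
$$\langle \bm{u} + \vbm,\, \xbm - \ybm\rangle \geq \frac{\|\bm{u}\|_2^2}{L} + \frac{\|\vbm\|_2^2}{2\tau},$$
while Young's inequality applied with parameter $\rho = 2\tau/L$ to the cross term $2\langle \bm{u}, \vbm\rangle$ yields
$$\|\bm{u} + \vbm\|_2^2 \leq (L+2\tau)\!\left(\frac{\|\bm{u}\|_2^2}{L} + \frac{\|\vbm\|_2^2}{2\tau}\right).$$
Chaining these two bounds gives $\langle G(\xbm) - G(\ybm), \xbm - \ybm\rangle \geq (1/(L+2\tau))\|G(\xbm) - G(\ybm)\|_2^2$, which is the stated cocoercivity.

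The only real subtlety is step two: although $D$ is only assumed Lipschitz and need not be monotone or cocoercive on its own, nonexpansiveness alone already forces $I - D$ to be firmly nonexpansive, equivalently $(1/2)$-cocoercive. This is essentially the classical identity that $T$ is nonexpansive iff $(I+T)/2$ is firmly nonexpansive, rewritten in terms of $R = I - D$, and it is precisely where the factor $2$ in front of $\tau$ in the final constant comes from. The choice $\rho = 2\tau/L$ in Young's inequality is then the natural one that balances the two summands against the Lipschitz bound $L + (1+\lambda)\tau \leq L + 2\tau$ on $G$; any other choice would yield a strictly worse constant. Setting $\varepsilon = 0$ in Theorem~\ref{Thm:NonExp} reassures that this constant is compatible with the step-size $0 < \gamma < 1/(L+2\tau)$ used there.
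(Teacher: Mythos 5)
Your proof is correct and reaches the paper's constant $1/(L+2\tau)$, but the way you combine the two summands is genuinely different from the paper's. Both arguments rest on the same two facts: $\nabla g$ is $(1/L)$-cocoercive by Baillon--Haddad (Lemma~\ref{Lem:CoCoerciveGrad}), and nonexpansiveness of $D$ makes $R = I-D$ a $(1/2)$-cocoercive operator, hence $\tau R$ is $(1/(2\tau))$-cocoercive. The paper glues these together operator-theoretically: using the equivalence between ``$\Tsf$ is $\beta$-cocoercive'' and ``$\Isf-2\beta\Tsf$ is nonexpansive'' (Lemma~\ref{Lem:MonotoneRelationships}), it writes $I - \tfrac{2}{L+2\tau}G$ as the convex combination $\tfrac{2\tau}{L+2\tau}(I-R) + \tfrac{L}{L+2\tau}\bigl(I-\tfrac{2}{L}\nabla g\bigr)$ of two nonexpansive operators and converts back. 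You instead add the two cocoercivity inequalities and bound $\|\bm{u}+\vbm\|_2^2$ by Young's inequality with $\rho = 2\tau/L$; this is a direct, self-contained proof of the general fact that a sum of a $\beta_1$- and a $\beta_2$-cocoercive operator is $(1/\beta_1+1/\beta_2)^{-1}$-cocoercive, and it makes explicit the optimization over $\rho$ that the paper hides in its choice of convex weights. The trade-off is that the paper's route is two lines once Lemma~\ref{Lem:MonotoneRelationships} is in hand, while yours needs no operator-theoretic machinery beyond Baillon--Haddad. One terminological nit: your displayed identity correctly establishes that $I-D$ is $(1/2)$-cocoercive, but that is not the same as $I-D$ being \emph{firmly nonexpansive} (which would mean $1$-cocoercive and fails for general nonexpansive $D$, e.g.\ $D=-I$); the classical equivalence is that $D$ is nonexpansive iff $(I-D)/2$ is firmly nonexpansive. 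This mislabel does not affect your argument, since only the $(1/2)$-cocoercivity is used.
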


\begin{proof}
Since $\nabla g$ is $L$-Lipschitz continuous, from Lemma~\ref{Lem:CoCoerciveGrad} is also $(1/L)$-cocoercive. Then, from Lemma~\ref{Lem:MonotoneRelationships}, $\Isf-(2/L)\nabla g$ is nonexpansive.

\medskip\noindent
Since $D = I - R$ is nonexpansive and any convex combination of nonexpansive operators is nonexpansive, we have that the following operator is also nonexpansive
\begin{align*}
I - \frac{2}{L+2\tau}G
&= \left[\frac{2}{L+2\tau} \cdot \frac{2\tau}{2}\right](I-R) + \left[\frac{2}{L+2\tau}\cdot \frac{L}{2}\right] (I - (2/L)\nabla g)\\
&= \left[1 - \frac{2}{L+2\tau}\cdot \frac{L}{2}\right](I-R) + \left[\frac{2}{L+2\tau}\cdot \frac{L}{2}\right] (I - (2/L)\nabla g).
\end{align*}
Thus, Lemma~\ref{Lem:MonotoneRelationships} implies that $G$ is $1/(L+2\tau)$-cocoercive.
\end{proof}

\begin{proposition}
Suppose Assumptions~\ref{As:ConvDataFit}-\ref{As:IdealDenCont} are true. Then, for any $0 < \gamma \leq 1/(L+2\tau)$, we have
$$\|\xbm-\gamma G(\xbm) - \xbmast\|_2^2 \leq \|\xbm-\xbmast\|_2^2 - \left(\frac{\gamma}{L+2\tau}\right)\|G(\xbm)\|_2^2,$$
where $\xbm \in \R^n$ and $\xbmast \in \Zer(G)$.
\end{proposition}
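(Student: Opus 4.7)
The plan is to leverage the previous proposition, which established that under Assumptions~\ref{As:ConvDataFit}-\ref{As:IdealDenCont} the operator $G$ is $(1/(L+2\tau))$-cocoercive. This is the only nontrivial ingredient; everything else is an expansion of a squared norm followed by a quadratic estimate in $\gamma$.

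First I would expand the squared norm in a standard way:
\begin{equation*}
\|\xbm-\gamma G(\xbm)-\xbmast\|_2^2 = \|\xbm-\xbmast\|_2^2 - 2\gamma \langle G(\xbm),\,\xbm-\xbmast\rangle + \gamma^2 \|G(\xbm)\|_2^2.
\end{equation*}
Next I would invoke cocoercivity of $G$ at the pair $(\xbm,\xbmast)$. Since $\xbmast \in \Zer(G)$, i.e. $G(\xbmast)=\zerobm$, cocoercivity of $G$ with constant $1/(L+2\tau)$ reduces to
\begin{equation*}
\langle G(\xbm),\,\xbm-\xbmast\rangle \;\geq\; \frac{1}{L+2\tau}\,\|G(\xbm)\|_2^2.
\end{equation*}
Substituting this into the expansion yields
\begin{equation*}
\|\xbm-\gamma G(\xbm)-\xbmast\|_2^2 \;\leq\; \|\xbm-\xbmast\|_2^2 - \gamma\!\left(\frac{2}{L+2\tau} - \gamma\right)\!\|G(\xbm)\|_2^2.
\end{equation*}

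The final step is to use the step-size restriction $0 < \gamma \leq 1/(L+2\tau)$, which gives $2/(L+2\tau) - \gamma \geq 1/(L+2\tau)$, so the coefficient of $\|G(\xbm)\|_2^2$ dominates $\gamma/(L+2\tau)$ in absolute value, yielding the claimed bound. There is no real obstacle here: the nontrivial content is already packaged in the cocoercivity of $G$ proved in the preceding proposition, and the rest is a textbook gradient-descent descent-lemma style calculation applied to the cocoercive operator $G$ rather than to a gradient. The only thing to be careful about is the direction of the inequality coming from the step-size choice, which requires $\gamma$ to lie in the safe range $(0,\,1/(L+2\tau)]$.
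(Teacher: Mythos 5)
Your proposal is correct and is essentially identical to the paper's proof: both expand the squared norm, apply the $(1/(L+2\tau))$-cocoercivity of $G$ from the preceding proposition at the pair $(\xbm,\xbmast)$ with $G(\xbmast)=\zerobm$, and then absorb the $\gamma^2\|G(\xbm)\|_2^2$ term using $\gamma \leq 1/(L+2\tau)$. The only cosmetic difference is that you group the two $\|G(\xbm)\|_2^2$ coefficients into a single factor $\gamma\left(\tfrac{2}{L+2\tau}-\gamma\right)$ before invoking the step-size bound, whereas the paper bounds the two terms separately.
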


\begin{proof}
We have the following set of relations
\begin{align*}
\|\xbm-\gamma G(\xbm)-\xbmast\|_2^2 
&= \|\xbm-\xbmast\|_2^2 - 2\gamma G(\xbm)^\Tsf(\xbm-\xbmast) + \gamma^2\|G(\xbm)\|_2^2 \\
&\leq \|\xbm-\xbmast\|_2^2 - \left(\frac{2\gamma}{L+2\tau}\right) \|G(\xbm)\|_2^2 + \left(\frac{\gamma}{L+2\tau}\right)\|G(\xbm)\|_2^2 \\
&=\|\xbm-\xbmast\|_2^2 - \left(\frac{\gamma}{L+2\tau}\right)\|G(\xbm)\|_2^2,
\end{align*}
where in the second row we used Proposition~\ref{Prop:CoCoerciveG} and the fact that $\gamma \leq 1/(L+2\tau)$.
\end{proof}

\begin{proposition}
\label{Prop:MonotoneG}
Suppose Assumptions~\ref{As:ConvDataFit}-\ref{As:IdealDenCont} are true with $\lambda < 1$. Then, $G = \nabla g + \tau R$  is $[\tau(1-\lambda)]$-strongly monotone and $[L+(1+\lambda)\tau]$-Lipschitz continuous.
\end{proposition}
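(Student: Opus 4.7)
The plan is to establish the two claimed properties of $G = \nabla g + \tau R$ (with $R = I - D$) separately, exploiting the additive structure together with the basic facts that convexity of $g$ makes $\nabla g$ monotone and the $\lambda$-Lipschitz assumption on $D$ controls $R = I - D$.

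For strong monotonicity, I would first show that $R$ is $(1-\lambda)$-strongly monotone. Expanding the inner product $\langle R(\xbm)-R(\ybm),\xbm-\ybm\rangle = \|\xbm-\ybm\|_2^2 - \langle D(\xbm)-D(\ybm),\xbm-\ybm\rangle$ and applying Cauchy--Schwarz together with the $\lambda$-Lipschitz bound on $D$ yields a lower bound of $(1-\lambda)\|\xbm-\ybm\|_2^2$; since $\lambda < 1$, this is strictly positive. Convexity of $g$ (Assumption~\ref{As:ConvDataFit}) gives that $\nabla g$ is monotone, i.e.\ $\langle \nabla g(\xbm)-\nabla g(\ybm),\xbm-\ybm\rangle \geq 0$. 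Adding these two estimates and scaling $R$ by $\tau > 0$ immediately delivers the strong monotonicity constant $\tau(1-\lambda)$ for $G$.

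For Lipschitz continuity, the triangle inequality gives $\|G(\xbm)-G(\ybm)\|_2 \leq \|\nabla g(\xbm)-\nabla g(\ybm)\|_2 + \tau \|R(\xbm)-R(\ybm)\|_2$. The first term is bounded by $L\|\xbm-\ybm\|_2$ by Assumption~\ref{As:ConvDataFit}. For the second, a further triangle inequality together with Assumption~\ref{As:IdealDenCont} gives $\|R(\xbm)-R(\ybm)\|_2 \leq \|\xbm-\ybm\|_2 + \|D(\xbm)-D(\ybm)\|_2 \leq (1+\lambda)\|\xbm-\ybm\|_2$. Combining these produces the Lipschitz constant $L + (1+\lambda)\tau$ as claimed.

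Both halves are short and use only the definitions of monotonicity and Lipschitz continuity, the convexity assumption on $g$, and the Cauchy--Schwarz/triangle inequalities, so I do not anticipate any substantive obstacle; this result is essentially a packaging of standard facts about sums of monotone and Lipschitz operators and is meant to feed into the cocoercivity argument of Proposition~\ref{Prop:CoCoerciveG} and the contraction estimate used in the proof of Theorem~\ref{Thm:Contraction}.
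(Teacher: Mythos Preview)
Your proposal is correct and essentially identical to the paper's own proof: the paper also uses monotonicity of $\nabla g$ from convexity, invokes the $(1-\lambda)$-strong monotonicity of $R = I - D$ (stated there as Lemma~\ref{Lem:StrongMonotoneResidual}, whose content is precisely your Cauchy--Schwarz expansion), and then adds; for Lipschitz continuity it applies the same two triangle inequalities. The only minor inaccuracy is your closing remark: this proposition feeds into Proposition~\ref{Prop:MonotoneConvIter} (the contraction estimate for $I-\gamma G$), not into Proposition~\ref{Prop:CoCoerciveG}, which is proved independently.
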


\begin{proof}
From the convexity and smoothness of $g$, we know that $\nabla g$ is monotone. Due to the contractiveness of $D$, Lemma~\ref{Lem:StrongMonotoneResidual} implies that $R$ is $(1-\lambda)$-strongly monotone. Thus, we have that
\begin{align*}
(G(\xbm)-G(\zbm))^\Tsf(\xbm-\zbm) 
&= (\nabla g(\xbm)-\nabla g(\zbm))^\Tsf(\xbm-\zbm) + \tau (R(\xbm)-R(\zbm))^\Tsf(\xbm-\zbm) \\
&\geq \tau (1-\lambda) \|\xbm-\zbm\|_2^2.
\end{align*}
To see the Lipschitz continuity, note that
$$\|G(\xbm)-G(\ybm)\|_2 \leq \|\nabla g(\xbm)-\nabla g(\ybm)\|_2 + \tau \|R(\xbm)-R(\ybm)\|_2 \leq L + \tau (1+\lambda).$$
\end{proof}

\begin{proposition}
\label{Prop:MonotoneConvIter}
Suppose Assumptions~\ref{As:ConvDataFit}-\ref{As:IdealDenCont} are true with $\lambda < 1$. Suppose that the step-size parameter is selected to satisfy
$$0 < \gamma < \frac{(1-\lambda)\tau}{(L+(1+\lambda)\tau)^2}.$$
Then, for any $\xbm \in \R^n$ and $\xbmast \in \Zer(G)$, we have
$$\|\xbm-\gamma G(\xbm)-\xbmast\|_2^2 \leq \eta^2 \|\xbm-\xbmast\|_2^2,$$
where $\eta^2 = 1-2\gamma [\tau(1-\lambda)] + \gamma^2 [L+(1+\lambda)\tau]^2 \in (0, 1)$.
\end{proposition}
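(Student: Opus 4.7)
The plan is to exploit the fact that $G(\xbmast)=\zerobm$ to rewrite the quantity $\|\xbm-\gamma G(\xbm)-\xbmast\|_2^2$ as $\|(\xbm-\xbmast)-\gamma(G(\xbm)-G(\xbmast))\|_2^2$, which reduces the proposition to the standard contraction estimate for a forward step on a strongly monotone and Lipschitz operator. Proposition~\ref{Prop:MonotoneG}, which is already established, provides exactly the two properties needed: strong monotonicity with constant $\tau(1-\lambda)$ and Lipschitz continuity with constant $L+(1+\lambda)\tau$.

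First I would expand the squared norm to obtain
\begin{equation*}
\|\xbm-\xbmast\|_2^2 - 2\gamma (G(\xbm)-G(\xbmast))^\Tsf(\xbm-\xbmast) + \gamma^2 \|G(\xbm)-G(\xbmast)\|_2^2.
\end{equation*}
Applying the strong monotonicity of $G$ from Proposition~\ref{Prop:MonotoneG} bounds the cross-term above by $-2\gamma\tau(1-\lambda)\|\xbm-\xbmast\|_2^2$, and applying the Lipschitz continuity bounds the quadratic term above by $\gamma^2(L+(1+\lambda)\tau)^2\|\xbm-\xbmast\|_2^2$. Collecting terms yields the desired inequality with the stated $\eta^2$.

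The remaining task is to verify $\eta^2 \in (0,1)$ under the step-size hypothesis. For the upper bound, the assumption $\gamma < (1-\lambda)\tau/(L+(1+\lambda)\tau)^2$ gives $\gamma^2(L+(1+\lambda)\tau)^2 < \gamma\tau(1-\lambda)$, so $\eta^2 < 1-\gamma\tau(1-\lambda) < 1$. For the lower bound, I would view $\eta^2$ as a convex quadratic in $\gamma$; its global minimum over all $\gamma \geq 0$ equals $1 - \tau^2(1-\lambda)^2/(L+(1+\lambda)\tau)^2$, which is strictly positive because $\tau(1-\lambda) < L+(1+\lambda)\tau$ whenever $L > 0$ (as assumed in Assumption~\ref{As:ConvDataFit}).

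I expect no serious obstacle: once Proposition~\ref{Prop:MonotoneG} is invoked, the argument is a textbook Banach-type contraction estimate. The only mildly subtle point is the strict positivity of $\eta^2$, which is not immediate from the step-size bound alone but is settled by examining the minimum of the quadratic in $\gamma$ and using $L > 0$.
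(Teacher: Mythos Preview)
Your proposal is correct and follows essentially the same route as the paper: expand the squared norm, invoke Proposition~\ref{Prop:MonotoneG} for the strong monotonicity and Lipschitz bounds, and collect terms to obtain $\eta^2$. Your verification that $\eta^2 \in (0,1)$ is in fact more careful than the paper's, which simply asserts the range without examining the minimum of the quadratic.
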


\begin{proof} 
Let $\ell = L + (1+\lambda)\tau$ and $\mu = (1-\lambda)\tau$.
\begin{align*}
\|\xbm-\gamma G(\xbm)-\xbmast\|_2^2 
&= \|\xbm-\xbmast\|_2^2 - 2\gamma G(\xbm)^\Tsf(\xbm-\xbmast) + \gamma^2 \|G(\xbm)\|_2^2\\
&\leq \|\xbm-\xbmast\|_2^2 - 2\gamma \mu \|\xbm-\xbmast\|_2^2 + \gamma^2 \ell^2 \|\xbm-\xbmast\|_2^2 \\
&= \eta^2 \|\xbm-\xbmast\|_2^2,
\end{align*}
where $\eta^2 = (1-2\gamma\mu + \gamma^2\ell^2)$. Thus, for any $0 < \gamma < 2\mu/\ell^2$, we have that $0 < \eta^2 < 1$.

\end{proof}

\begin{proposition}
\label{Prop:MinimizationUsingProx}
Suppose Assumptions~\ref{As:ConvDataFit}-\ref{As:LipschitzReg}

the update
$$\xbm^+ = \xbm - \gamma \Ghat(\xbm), \quad \xbm \in \R^n,$$
under , where $D_\sigma = \prox_{\sigma^2 h}$ and $0 < \gamma \leq 1/(L+2\tau)$.
Then, for any , we have
$$\|\nabla f_{(1/\tau)}(\xbm)\|_2^2 \leq \frac{2}{\gamma} (f_{(1/\tau)}(\xbm)-f_{(1/\tau)}(\xbm^+)) + \sigma^2\varepsilon^2,$$
\end{proposition}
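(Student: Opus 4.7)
The plan is to recognize this inequality as the one-step descent bound that already appears inside the proof of Theorem~\ref{Thm:ApproxStatEst} (cf.~eq.~\eqref{Eq:Thm4GradBound}), and to re-derive it in isolation as a clean per-iteration lemma. The first step is to identify $G_\sigma = \nabla g + \tau(I - \prox_{\sigma^2 h})$ with the gradient of the Moreau-smoothed objective $f_{1/\tau} = g + \tau h_{\sigma^2}$, where $\sigma^2 = 1/\tau$; this is precisely Lemma~\ref{Lem:GradMorProxRes}, which additionally gives that $\nabla f_{1/\tau}$ is $(L+2\tau)$-Lipschitz continuous. Applying the standard quadratic upper bound for smooth functions at the iterate $\xbm^+ = \xbm - \gamma\widehat{G}(\xbm)$ yields
$$f_{1/\tau}(\xbm^+) \leq f_{1/\tau}(\xbm) - \gamma \nabla f_{1/\tau}(\xbm)^\Tsf \widehat{G}(\xbm) + \frac{\gamma^2 (L+2\tau)}{2}\|\widehat{G}(\xbm)\|_2^2.$$
Using the step-size condition $\gamma \leq 1/(L+2\tau)$ to absorb the quadratic coefficient into $\gamma/2$, this reduces to
$$f_{1/\tau}(\xbm^+) \leq f_{1/\tau}(\xbm) + \frac{\gamma}{2}\bigl[\|\widehat{G}(\xbm)\|_2^2 - 2\nabla f_{1/\tau}(\xbm)^\Tsf\widehat{G}(\xbm)\bigr].$$

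Next I would control the bracketed quantity via the mismatch bound. Since $\nabla f_{1/\tau} = G$ and $\widehat{G}-G = \tau[(I-\widehat{D}_\sigma) - (I-D_\sigma)]$, Assumption~\ref{As:BoundedError} gives $\|\widehat{G}(\xbm) - \nabla f_{1/\tau}(\xbm)\|_2 \leq \tau\sigma\varepsilon$. The algebraic identity
$$\|\widehat{G}(\xbm)\|_2^2 - 2\nabla f_{1/\tau}(\xbm)^\Tsf\widehat{G}(\xbm) = \|\widehat{G}(\xbm) - \nabla f_{1/\tau}(\xbm)\|_2^2 - \|\nabla f_{1/\tau}(\xbm)\|_2^2$$
then supplies an upper bound of $\tau^2\sigma^2\varepsilon^2 - \|\nabla f_{1/\tau}(\xbm)\|_2^2$. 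Plugging this in, rearranging terms, and dividing by $\gamma/2$ produces the desired inequality (with the natural constant $\tau^2\sigma^2\varepsilon^2$, which coincides with $\sigma^2\varepsilon^2$ under the convention $\tau = 1/\sigma^2$ adopted throughout this section).

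The main obstacle is really only notational: one must be careful to use the correct parametrization linking $\sigma$ and $\tau$ (so that $f_{1/\tau}$ and $\nabla f_{1/\tau} = G$ are well-defined through the Moreau envelope of the MAP regularizer $h$), and to invoke Assumption~\ref{As:BoundedError} in the form $\|\widehat{G} - G\|_2 \leq \tau\sigma\varepsilon$ rather than on the denoisers themselves. Once these identifications are made, the proof is a two-line manipulation and does not require any new ingredient beyond what was already used in the proof of Theorem~\ref{Thm:ApproxStatEst}; in fact, the present proposition is best viewed as the self-contained one-iteration engine that drives the telescoping argument in that theorem.
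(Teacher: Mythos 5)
Your derivation is correct and is exactly the argument the paper uses: the paper gives no standalone proof of this proposition, but the identical chain of steps --- the quadratic upper bound from the $(L+2\tau)$-Lipschitz continuity of $\nabla f_{\sigma^2}$ via Lemma~\ref{Lem:GradMorProxRes}, the step-size condition $\gamma \leq 1/(L+2\tau)$, and the identity $\|\Ghat(\xbm)\|_2^2 - 2\nabla f_{\sigma^2}(\xbm)^\Tsf\Ghat(\xbm) = \|\Ghat(\xbm)-\nabla f_{\sigma^2}(\xbm)\|_2^2 - \|\nabla f_{\sigma^2}(\xbm)\|_2^2$ combined with $\|\Ghat(\xbm) - \nabla f_{\sigma^2}(\xbm)\|_2 \leq \tau\sigma\varepsilon$ --- appears verbatim as eq.~\eqref{Eq:Thm4GradBound} in the proof of Theorem~\ref{Thm:ApproxStatEst}, of which this proposition is indeed the one-iteration engine.

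One correction to your closing remark: the constant your argument produces is $\tau^2\sigma^2\varepsilon^2$, which under the convention $\tau = 1/\sigma^2$ equals $\varepsilon^2/\sigma^2$ (equivalently $\tau\varepsilon^2$), \emph{not} $\sigma^2\varepsilon^2$. The $\sigma^2\varepsilon^2$ in the proposition as printed appears to be a typo (the statement itself is visibly garbled, with missing clauses); the way the bound is consumed in Theorem~\ref{Thm:ApproxStatEst}, producing the $\varepsilon^2 R/\sigma^2$ term in the final rate, confirms that $\varepsilon^2/\sigma^2$ is the intended constant. So you should not assert that the two expressions coincide; state the bound with $\tau^2\sigma^2\varepsilon^2$ and flag the discrepancy with the printed statement rather than paper over it.
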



\subsection{Background material}
\label{Sec:BackgroundTheory}

The results in this section are well-known in the optimization literature and can be found in different forms in standard textbooks~\cite{Nesterov2004, Bauschke.Combettes2017, Rockafellar1970, Boyd.Vandenberghe2004}. We summarize the results useful for our analysis by restating them in a convenient form.

\begin{lemma}
\label{Lem:CoCoerciveGrad}
For convex and continuously differentiable function $g$, we have
$$\text{$\nabla g$ is $L$-Lipschitz continuous} \quad\Leftrightarrow\quad \text{$\nabla g$ is $(1/L)$-cocoercive.}$$
\end{lemma}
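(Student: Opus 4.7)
The statement is the classical Baillon--Haddad theorem. The $(\Leftarrow)$ direction is immediate by Cauchy--Schwarz: if $\nabla g$ is $(1/L)$-cocoercive, then
\[
(1/L)\|\nabla g(\xbm) - \nabla g(\ybm)\|_2^2 \leq (\nabla g(\xbm)-\nabla g(\ybm))^\Tsf(\xbm-\ybm) \leq \|\nabla g(\xbm)-\nabla g(\ybm)\|_2\|\xbm-\ybm\|_2,
\]
and dividing out (trivial if the left side is zero) yields the $L$-Lipschitz bound. So the real work is the forward direction $(\Rightarrow)$.

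For $(\Rightarrow)$, the plan is to use the ``descent lemma'' trick combined with an auxiliary convex function. Since $\nabla g$ is $L$-Lipschitz, a standard integration argument (independent of convexity) gives the quadratic upper bound
\[
g(\ybm) \leq g(\xbm) + \nabla g(\xbm)^\Tsf(\ybm-\xbm) + \tfrac{L}{2}\|\ybm-\xbm\|_2^2, \quad \text{for all } \xbm, \ybm \in \R^n.
\]
For a fixed $\xbm \in \R^n$, I would introduce the convex function $\phi_\xbm(\zbm) \defn g(\zbm) - \nabla g(\xbm)^\Tsf \zbm$. Then $\nabla \phi_\xbm(\zbm) = \nabla g(\zbm) - \nabla g(\xbm)$ is also $L$-Lipschitz, and crucially $\nabla \phi_\xbm(\xbm) = \zerobm$, so by convexity $\xbm$ is a global minimizer of $\phi_\xbm$.

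Next, I would apply the descent lemma for $\phi_\xbm$ at the ``gradient step'' point $\ybm - (1/L)\nabla \phi_\xbm(\ybm)$, obtaining
\[
\phi_\xbm\!\left(\ybm - \tfrac{1}{L}\nabla \phi_\xbm(\ybm)\right) \leq \phi_\xbm(\ybm) - \tfrac{1}{2L}\|\nabla \phi_\xbm(\ybm)\|_2^2.
\]
Since $\xbm$ minimizes $\phi_\xbm$, the left side is at least $\phi_\xbm(\xbm)$. Expanding the definition of $\phi_\xbm$ and rearranging yields
\[
g(\xbm) \leq g(\ybm) + \nabla g(\xbm)^\Tsf(\xbm-\ybm) - \tfrac{1}{2L}\|\nabla g(\ybm) - \nabla g(\xbm)\|_2^2.
\]

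Finally, I would symmetrize: swap the roles of $\xbm$ and $\ybm$ to get the twin inequality, add the two, and watch the $g(\xbm) + g(\ybm)$ terms cancel on both sides. What remains is exactly
\[
0 \leq (\nabla g(\xbm)-\nabla g(\ybm))^\Tsf(\xbm-\ybm) - \tfrac{1}{L}\|\nabla g(\xbm)-\nabla g(\ybm)\|_2^2,
\]
which is the desired $(1/L)$-cocoercivity. The only subtle step is the introduction of $\phi_\xbm$: without leveraging convexity through this auxiliary minimizer, the descent lemma alone only yields Lipschitzness of $\nabla g$, not the sharper cocoercivity bound. That is the key conceptual obstacle; the rest is algebra.
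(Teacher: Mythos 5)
Your proposal is correct: it is the standard proof of the Baillon--Haddad theorem, obtaining the key one-sided inequality from the auxiliary convex function $\phi_\xbm(\zbm) = g(\zbm) - \nabla g(\xbm)^\Tsf\zbm$ minimized at $\xbm$, and then symmetrizing. The paper does not prove the lemma itself but simply cites Theorem~2.1.5 of Nesterov's textbook, and your argument is precisely the one given there, so the two are in agreement.
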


\begin{proof}
See Theorem 2.1.5 in Section 2.1 of~\cite{Nesterov2004} 
\end{proof}

\begin{lemma}
\label{Lem:MonotoneRelationships}
Consider an operator $\Tsf: \R^n \rightarrow \R^n$ and $\beta > 0$. The following properties are equivalent
\begin{enumerate}[label=(\alph*)]
\item $\Tsf$ is $\beta$-cocoercive;
\item $\beta\Tsf$ is firmly nonexpansive;
\item $\Isf-\beta\Tsf$ is firmly nonexpansive;
\item $\beta\Tsf$ is $(1/2)$-averaged;
\item $\Isf - 2\beta \Tsf$ is nonexpansive.
\end{enumerate}
\end{lemma}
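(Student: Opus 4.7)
My plan is to establish the five properties as equivalent by proving (a)$\Leftrightarrow$(b), (b)$\Leftrightarrow$(c), (b)$\Leftrightarrow$(d), (d)$\Leftrightarrow$(e), which suffices. Each equivalence reduces to a short algebraic identity obtained by expanding a squared norm, together with rescaling by the constant $\beta$. No fixed-point or limit argument is required; these are purely pointwise statements about pairs $\xbm,\ybm\in\R^n$.

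For (a)$\Leftrightarrow$(b), I would write the $\beta$-cocoercivity inequality $(\Tsf(\xbm)-\Tsf(\ybm))^\Tsf(\xbm-\ybm)\geq \beta\|\Tsf(\xbm)-\Tsf(\ybm)\|_2^2$ and multiply through by $\beta$; recognizing $\beta\Tsf$ as a single operator $\Ssf$, the inequality becomes $(\Ssf(\xbm)-\Ssf(\ybm))^\Tsf(\xbm-\ybm)\geq \|\Ssf(\xbm)-\Ssf(\ybm)\|_2^2$, which is exactly the definition of firm nonexpansiveness of $\Ssf=\beta\Tsf$. Every step is reversible, so the two properties are equivalent.

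For (b)$\Leftrightarrow$(c), I would use the expansion
\begin{equation*}
\|(\Isf-\beta\Tsf)(\xbm)-(\Isf-\beta\Tsf)(\ybm)\|_2^2 = \|\xbm-\ybm\|_2^2 - 2(\beta\Tsf(\xbm)-\beta\Tsf(\ybm))^\Tsf(\xbm-\ybm) + \|\beta\Tsf(\xbm)-\beta\Tsf(\ybm)\|_2^2,
\end{equation*}
and the analogous expansion for $\|\beta\Tsf(\xbm)-\beta\Tsf(\ybm)\|_2^2$ relative to $\|\xbm-\ybm\|_2^2 - \|(\Isf-\beta\Tsf)(\xbm)-(\Isf-\beta\Tsf)(\ybm)\|_2^2$. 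The firm nonexpansiveness condition $\|\Ssf(\xbm)-\Ssf(\ybm)\|_2^2\leq (\Ssf(\xbm)-\Ssf(\ybm))^\Tsf(\xbm-\ybm)$ transfers symmetrically between $\beta\Tsf$ and $\Isf-\beta\Tsf$, giving (b)$\Leftrightarrow$(c).

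For (b)$\Leftrightarrow$(d) I would use the representation $\beta\Tsf=\tfrac12\Isf+\tfrac12(2\beta\Tsf-\Isf)$ and show that firm nonexpansiveness of $\beta\Tsf$ is equivalent to nonexpansiveness of $\Nsf\defn 2\beta\Tsf-\Isf$. Expanding $\|\Nsf(\xbm)-\Nsf(\ybm)\|_2^2=4\|\beta\Tsf(\xbm)-\beta\Tsf(\ybm)\|_2^2 - 4(\beta\Tsf(\xbm)-\beta\Tsf(\ybm))^\Tsf(\xbm-\ybm)+\|\xbm-\ybm\|_2^2$ shows $\|\Nsf(\xbm)-\Nsf(\ybm)\|_2^2\leq\|\xbm-\ybm\|_2^2$ iff $(\beta\Tsf(\xbm)-\beta\Tsf(\ybm))^\Tsf(\xbm-\ybm)\geq\|\beta\Tsf(\xbm)-\beta\Tsf(\ybm)\|_2^2$, which is firm nonexpansiveness. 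For (d)$\Leftrightarrow$(e) I would observe that (d) is precisely the statement that there exists a nonexpansive $\Nsf$ with $\beta\Tsf=\tfrac12\Isf+\tfrac12\Nsf$, equivalently $\Nsf=2\beta\Tsf-\Isf$; hence (d) is the nonexpansiveness of $2\beta\Tsf-\Isf$, and $\Isf-2\beta\Tsf=-(2\beta\Tsf-\Isf)$ is nonexpansive iff $2\beta\Tsf-\Isf$ is, which yields (e).

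The proof has no real obstacle: every implication is a rearrangement of the squared-norm identity $\|\ubm+\vbm\|_2^2=\|\ubm\|_2^2+2\ubm^\Tsf\vbm+\|\vbm\|_2^2$ combined with the rescaling $\Ssf=\beta\Tsf$. The mildest care point is keeping the factor of $\beta$ consistent throughout (so that (a) is stated for $\Tsf$ while (b)--(e) are stated for $\beta\Tsf$ or $\Isf-\beta\Tsf$); I would handle this by making the substitution $\Ssf\defn\beta\Tsf$ explicitly at the start and phrasing each equivalence in terms of $\Ssf$.
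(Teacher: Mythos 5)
Your proof is correct, and each of the four equivalences you chain together ((a)$\Leftrightarrow$(b), (b)$\Leftrightarrow$(c), (b)$\Leftrightarrow$(d), (d)$\Leftrightarrow$(e)) is verified by exactly the squared-norm expansion you indicate; the only subtlety in (d)$\Leftrightarrow$(e) --- that the averaging representation forces $\Nsf = 2\beta\Tsf - \Isf$, so (d) is literally the nonexpansiveness of that operator --- is one you handle explicitly. The paper itself does not prove this lemma but merely cites Proposition~4 of the supplementary material of~\cite{Sun.etal2021}, where the standard argument is precisely the pointwise algebraic one you give, so your write-up is a self-contained version of the same approach.
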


\begin{proof}
See Proposition 4 in the Supplementary Material\footnote{It can also be found in the pre-print \url{https://arxiv.org/pdf/2006.03224.pdf}.} of~\cite{Sun.etal2021}. 
\end{proof}

\begin{lemma}
\label{Lem:StrongMonotoneResidual}
Consider an operator $\Rsf = \Isf - \Dsf$ where $\Dsf: \R^n \rightarrow \R^n$.
$$\text{$\Dsf$ is Lipschitz continuous with constant $\lambda < 1$} \quad\Rightarrow\quad \text{$\Rsf$ is $(1-\lambda)$-strongly monotone.}$$
\end{lemma}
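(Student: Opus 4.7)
The plan is to verify the defining inequality of $(1-\lambda)$-strong monotonicity directly, by expanding the inner product $(\Rsf(\xbm) - \Rsf(\ybm))^\Tsf(\xbm-\ybm)$ and then estimating the resulting cross-term using the Lipschitz hypothesis on $\Dsf$. Concretely, strong monotonicity with modulus $(1-\lambda)$ amounts to showing
\begin{equation*}
(\Rsf(\xbm) - \Rsf(\ybm))^\Tsf(\xbm - \ybm) \geq (1-\lambda)\|\xbm-\ybm\|_2^2 \quad \text{for all } \xbm, \ybm \in \R^n.
\end{equation*}

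First I would substitute $\Rsf = \Isf - \Dsf$ and separate the identity part from the $\Dsf$ part, obtaining
\begin{equation*}
(\Rsf(\xbm) - \Rsf(\ybm))^\Tsf(\xbm-\ybm) = \|\xbm-\ybm\|_2^2 - (\Dsf(\xbm) - \Dsf(\ybm))^\Tsf(\xbm-\ybm).
\end{equation*}
Next I would control the cross-term by applying Cauchy--Schwarz followed by the Lipschitz continuity of $\Dsf$:
\begin{equation*}
(\Dsf(\xbm) - \Dsf(\ybm))^\Tsf(\xbm-\ybm) \leq \|\Dsf(\xbm)-\Dsf(\ybm)\|_2 \, \|\xbm-\ybm\|_2 \leq \lambda \|\xbm-\ybm\|_2^2.
\end{equation*}
Combining the two displays yields the desired bound, and since $\lambda < 1$ we have $1-\lambda > 0$, giving strong monotonicity with the claimed modulus.

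There is no real obstacle here: the argument is a two-line estimate, and the only places where the hypothesis is used are the Cauchy--Schwarz step and the direct application of the Lipschitz bound with constant $\lambda$. Strictness of $\lambda < 1$ enters only to guarantee that the resulting modulus $(1-\lambda)$ is positive, which is what distinguishes strong monotonicity from mere monotonicity. The same calculation would additionally give that $\Rsf$ is monotone (though not strongly so) in the borderline nonexpansive case $\lambda = 1$.
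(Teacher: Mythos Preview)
Your argument is correct and is the standard two-line verification: expand the inner product using $\Rsf=\Isf-\Dsf$, then bound the cross-term via Cauchy--Schwarz and the $\lambda$-Lipschitz hypothesis. The paper itself does not supply a self-contained proof of this lemma but simply cites Proposition~2 in the Supplementary Material of~\cite{Sun.etal2021}; your computation is exactly the elementary argument one would expect to find there, so there is no meaningful methodological difference to discuss.
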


\begin{proof}
See Proposition 2 in the Supplementary Material of~\cite{Sun.etal2021}. 
\end{proof}

\begin{definition}
\label{Def:MoreauEnv}
Consider a proper, closed, and convex function $h$ and a constant $\mu > 0$. We define the \emph{proximal operator}
$$\prox_{\mu h}(\xbm) = \argmin_{\zbm \in \R^n}\left\{\frac{1}{2}\|\zbm-\xbm\|^2 + \mu h(\zbm)\right\}$$
and the \emph{Moreau envelope}
$$h_\mu(\xbm) = \min_{\zbm \in \R^n} \left\{\frac{1}{2}\|\zbm-\xbm\|^2 + \mu h(\zbm)\right\}.$$
\end{definition}

\medskip\noindent
The following two lemmas provide useful results on the Moreau envelope. The Moreau envelope is convex and smooth.
\begin{lemma}
\label{Lem:GradMorProxRes}
The function $h_\mu$ is convex and continuously differentiable with a $1$-Lipschitz gradient
$$\nabla h_\mu(\xbm) = \xbm - \prox_{\mu h}(\xbm).$$
\end{lemma}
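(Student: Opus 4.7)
The plan is to split the lemma into three statements: convexity of $h_\mu$, the gradient formula $\nabla h_\mu(\xbm) = \xbm - \prox_{\mu h}(\xbm)$, and $1$-Lipschitz continuity of this gradient. Throughout I denote $\xbmhat \defn \prox_{\mu h}(\xbm)$ and $\ybmhat \defn \prox_{\mu h}(\ybm)$, which are well-defined because the inner objective $\zbm \mapsto \frac{1}{2}\|\zbm-\xbm\|^2 + \mu h(\zbm)$ is $1$-strongly convex (due to the quadratic term) and hence admits a unique minimizer under Assumption~\ref{As:LipschitzReg} (properness, closedness, convexity of $h$).

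First I would establish convexity of $h_\mu$ by noting that $(\xbm,\zbm) \mapsto \frac{1}{2}\|\zbm-\xbm\|^2 + \mu h(\zbm)$ is jointly convex on $\R^n \times \R^n$ (the quadratic is jointly convex in the difference, and $\mu h$ is convex in $\zbm$); partial minimization of a jointly convex function preserves convexity, so $h_\mu$ is convex. Next I would derive the gradient formula. The first-order optimality condition at $\xbmhat$ yields $\xbm - \xbmhat \in \mu \partial h(\xbmhat)$. Substituting the feasible candidate $\zbm = \xbmhat$ in the definition of $h_\mu(\ybm)$ and expanding $\frac{1}{2}\|\xbmhat-\ybm\|^2 = \frac{1}{2}\|(\xbmhat-\xbm)+(\xbm-\ybm)\|^2$ produces the upper envelope bound
\[
h_\mu(\ybm) \;\leq\; h_\mu(\xbm) + (\xbm - \xbmhat)^\Tsf(\ybm-\xbm) + \tfrac{1}{2}\|\ybm-\xbm\|^2.
\]
Swapping the roles of $\xbm$ and $\ybm$ gives the matching lower bound
\[
h_\mu(\ybm) \;\geq\; h_\mu(\xbm) + (\ybm - \ybmhat)^\Tsf(\ybm-\xbm) - \tfrac{1}{2}\|\ybm-\xbm\|^2.
\]
Because $\prox_{\mu h}$ is (firmly) nonexpansive, $\ybmhat \to \xbmhat$ as $\ybm \to \xbm$, so both sides sandwich to the first-order Taylor expansion $h_\mu(\xbm) + (\xbm-\xbmhat)^\Tsf(\ybm-\xbm) + o(\|\ybm-\xbm\|)$. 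This simultaneously yields differentiability of $h_\mu$ at every $\xbm$ and the identity $\nabla h_\mu(\xbm) = \xbm - \prox_{\mu h}(\xbm)$; continuity of $\nabla h_\mu$ is then automatic from continuity of $\prox_{\mu h}$.

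Finally, for the $1$-Lipschitz continuity of $\nabla h_\mu$, I would invoke the standard fact that the resolvent $\prox_{\mu h}$ of the maximal monotone operator $\mu \partial h$ is firmly nonexpansive; by Lemma~\ref{Lem:MonotoneRelationships} this is equivalent to $I - \prox_{\mu h}$ being firmly nonexpansive, and in particular $1$-Lipschitz. Therefore $\|\nabla h_\mu(\xbm) - \nabla h_\mu(\ybm)\|_2 = \|(\Isf - \prox_{\mu h})(\xbm) - (\Isf - \prox_{\mu h})(\ybm)\|_2 \leq \|\xbm-\ybm\|_2$, completing the proof. The main technical obstacle I anticipate is cleanly executing the sandwich argument for differentiability, which requires carefully combining the envelope inequality obtained from the minimality of $\xbmhat$ with its swap so as to isolate a linear error plus a $O(\|\ybm-\xbm\|^2)$ remainder; everything else reduces to invoking known properties of proximal operators already listed in Appendix~\ref{Sec:BackgroundTheory}.
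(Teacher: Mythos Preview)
Your argument is correct. The paper itself does not prove this lemma; it simply cites Proposition~8 in~\cite{Sun.etal2019b}. Your proposal therefore supplies a self-contained proof where the paper defers to an external reference. Each of your three steps is standard and sound: convexity via partial minimization of a jointly convex function, the gradient formula via the envelope sandwich (your upper and lower bounds combine with nonexpansiveness of $\prox_{\mu h}$ to give a remainder of order $\|\ybm-\xbm\|_2^2$, which is indeed $o(\|\ybm-\xbm\|_2)$), and the $1$-Lipschitz property via firm nonexpansiveness of $\Isf-\prox_{\mu h}$, correctly invoking Lemma~\ref{Lem:MonotoneRelationships} with $\beta=1$. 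One minor remark: you cite Assumption~\ref{As:LipschitzReg} for the well-posedness of $\prox_{\mu h}$, but only the proper/closed/convex part is needed here; the Lipschitz continuity of $h$ is irrelevant for this lemma.
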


\begin{proof}
See Proposition~8 in \cite{Sun.etal2019b}.
\end{proof}

\medskip\noindent
The Moreau envelope can also serve as a smooth approximation of a nonsmooth function.
\begin{lemma}
\label{Lem:UniformBoundMoreau}
Consider $h \in \R^n$ and its Moreau envelope $h_\mu(\xbm)$ for $\mu > 0$. Then,
$$0 \leq h(\xbm) - \frac{1}{\mu}h_\mu(\xbm) \leq \frac{\mu}{2}G_\xbm^2\quad\text{with}\quad G_\xbm^2 \defn \min_{\gbm \in \partial h(\xbm)} \|\gbm\|^2, \quad \forall \xbm \in \R^n.$$
\end{lemma}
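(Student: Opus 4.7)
The plan is to establish the two sides of the sandwich inequality separately: the lower bound by a one-line substitution into the definition of the Moreau envelope, and the upper bound by combining the subgradient inequality for convex $h$ with an elementary scalar optimization over the distance to the proximal point.

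For the lower bound $0 \leq h(\xbm) - \tfrac{1}{\mu}h_\mu(\xbm)$, I would evaluate the infimum in Definition~\ref{Def:MoreauEnv} at the feasible choice $\zbm = \xbm$, which gives $h_\mu(\xbm) \leq \tfrac{1}{2}\|\xbm-\xbm\|^2 + \mu h(\xbm) = \mu h(\xbm)$. Dividing by $\mu > 0$ yields the claim immediately, with no use of any property of $h$ beyond being well-defined at $\xbm$.

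For the upper bound, set $\zbm^\star \defn \prox_{\mu h}(\xbm)$, so that by definition $h_\mu(\xbm) = \tfrac{1}{2}\|\zbm^\star-\xbm\|_2^2 + \mu h(\zbm^\star)$. Fixing any subgradient $\gbm \in \partial h(\xbm)$, the convexity of $h$ gives
$$h(\xbm) - h(\zbm^\star) \leq \gbm^\Tsf(\xbm-\zbm^\star) \leq \|\gbm\|_2\,\|\xbm-\zbm^\star\|_2.$$
Substituting back and letting $t \defn \|\xbm-\zbm^\star\|_2 \geq 0$,
$$h(\xbm) - \tfrac{1}{\mu}h_\mu(\xbm) = \bigl(h(\xbm)-h(\zbm^\star)\bigr) - \tfrac{1}{2\mu}\|\xbm-\zbm^\star\|_2^2 \leq \|\gbm\|_2\, t - \tfrac{t^2}{2\mu}.$$
The right-hand side is a concave quadratic in $t$ maximized at $t = \mu\|\gbm\|_2$ with value $\tfrac{\mu}{2}\|\gbm\|_2^2$. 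Taking the infimum over $\gbm \in \partial h(\xbm)$ produces the stated bound $h(\xbm) - \tfrac{1}{\mu}h_\mu(\xbm) \leq \tfrac{\mu}{2}G_\xbm^2$.

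There is no serious obstacle in this argument—it is the classical uniform Moreau-smoothing error bound—so the only item that requires care is bookkeeping of the $\mu$ normalization, which is placed in front of $h$ inside the envelope in Definition~\ref{Def:MoreauEnv} but multiplies $h_\mu$ as $1/\mu$ in the lemma statement. In the downstream use via eq.~\eqref{Eq:Thm4SmoothingBound}, Assumption~\ref{As:LipschitzReg} supplies the uniform upper bound $G_\xbm \leq S$, at which point the lemma delivers exactly the additive $\tfrac{S\sigma^2}{2}$ gap between $f$ and its smoothed surrogate that is needed in the proof of Theorem~\ref{Thm:ApproxStatEst}.
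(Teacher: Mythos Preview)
Your argument is correct and is the standard derivation of the Moreau-smoothing error bound. The paper does not actually prove this lemma in-line; it simply cites Proposition~9 of~\cite{Sun.etal2019b}, so your self-contained treatment is a strict addition rather than a divergence from anything in the text.
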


\begin{proof}
See Proposition~9 in~\cite{Sun.etal2019b}.
\end{proof}





\bibliographystyle{IEEEtran}



\end{document}